\makeatletter \@addtoreset{equation}{section} \makeatother
\newtheorem{proposition}{Proposition}
\newtheorem{theorem}{Theorem}
\newtheorem{lemma}{Lemma}
\def\dfrac{\displaystyle\frac}
\def\intd{\displaystyle\int}
\def\mdet{\mathrm{det}}
\begin{document}
\title{On the correlation function of the characteristic polynomials of the hermitian Wigner ensemble}
\author{ T. Shcherbina\\
 Institute for Low Temperature Physics, Kharkov,
Ukraine. \\E-mail: t\underline{ }shcherbina@rambler.ru
}
\date{}
\maketitle \centerline{{\it 2000 Mathematics Subject Classification.} Primary 15A52; Secondary 15A57}
\begin{abstract}
We consider the asymptotics of the correlation functions of the characteristic polynomials of the
hermitian Wigner matrices $H_n=n^{-1/2}W_n$. We show that for the correlation function of any even order the
asymptotic coincides with this for the GUE up to a
factor, depending only on the forth moment of the common probability law $Q$ of entries
$\Im W_{jk}$, $\Re W_{jk}$, i.e. that the higher moments of $Q$ do not
contribute to the above limit.
\end{abstract}
\section{Introduction}
Characteristic polynomials of random matrices have been actively studied
in the last years. The interest was initially stimulated by the similarity between the
asymptotic behavior of the moments of characteristic polynomials of a random matrix from the Circular Unitary Ensemble
and the moments of the Riemann $\zeta$-function along its critical line (see \cite{K-Sn:00}).
But with the emerging connections to the quantum chaos, integrable systems, combinatorics, representation
theory and others, it has become apparent that the characteristic polynomials of random matrices are also
of independent interest. This motivate the asymptotic study of the moments of characteristic
polynomials for other random matrix ensembles (see e.g. \cite{Me-Nor:01}, \cite{Br-Hi:01}).

In this paper we consider the hermitian Wigner Ensembles with symmetric entries distribution, i.e. hermitian $n\times n$
random matrices
\begin{equation}\label{H}
H_n=n^{-1/2}W_n
\end{equation}
with independent (modulo symmetry) and identically distributed entries $\Re W_{j,k}$
and $\Im W_{j,k}$ such that
\begin{equation}\label{W}
\begin{array}{cc}
\mathbf{E}\{W_{jk}\}=\mathbf{E}\{(W_{jk})^2\}=0,&\quad \mathbf{E}\{|W_{jk}|^2\}=1,\\
\mathbf{E}\{\Re^{2l+1} W_{jk}\}=\mathbf{E}\{\Im^{2l+1} W_{jk}\}=0,&\quad j,k=1,..,n,\quad l\in \mathbb{N}.
\end{array}
\end{equation}
Denote by $\lambda_1^{(n)},\ldots,\lambda_n^{(n)}$ the eigenvalues of
random matrix and define their Normalized Counting Measure
(NCM) as
\begin{equation}  \label{NCM}
N_n(\triangle)=\sharp\{\lambda_j^{(n)}\in
\triangle,j=1,..,n \}/n,\quad N_n(\mathbb{R})=1,
\end{equation}
where $\triangle$ is an arbitrary interval of the real axis. The global regime of the random matrix theory,
centered around the weak convergence of the Normalized Counting Measure of
eigenvalues, is well-studied for many ensembles. It is shown that
$N_n$ converges weakly to a non-random limiting measure $N$ known
as the Integrated Density of States (IDS). The IDS is normalized
to unity and is absolutely continuous in many cases
\begin{equation}  \label{rho}
N(\mathbb{R})=1,\quad N(\triangle)=\displaystyle\int\limits_\triangle
\rho(\lambda)d\,\lambda.
\end{equation}
The non-negative function $\rho$ in (\ref{rho}) is called the limiting
density of states of the ensemble.
In the case of Wigner hermitian ensemble it is well-known (see, e.g.,\cite{Pa:72}) that
\begin{equation}\label{rho_sc}
\rho(\lambda)=\rho_{sc}(\lambda)=\dfrac{1}{2\pi}\sqrt{4-\lambda^2}.
\end{equation}

  The mixed moments or the correlation functions of characteristic polynomials are
\begin{equation}\label{F}
F_{2m}(\Lambda)=\displaystyle\int\limits_{\mathcal{H}_n}\prod\limits_{j=1}^{2m}\mdet(\lambda_j-H)P_n(d\,H_n),
\end{equation}
where $\mathcal{H}_n$ is the space of hermitian $n\times n$ matrices,
\begin{equation}\label{dH}
d\,H_n=\prod\limits_{j=1}^n d\,H_{jj}\prod\limits_{1\le j<k\le n}\Re H_{j,k}\Im H_{j,k}
\end{equation}
is the standard Lebesgues measure on $\mathcal{H}_n$, $P_n(d\,H_n)$ is probability law of the $n\times n$ random matrix $H_n$,
and $\Lambda=\{\lambda_j\}_{j=1}^{2m}$ are real or complex parameters
that may depend on $n$.

We are interested in the asymptotic behavior of (\ref{F}) for matrices (\ref{H}) as $n\to\infty$ for
\begin{equation*}
\lambda_j=\lambda_0+\dfrac{\xi_j}{n\rho_{sc}(\lambda_0)},\quad j=1,..,2m,
\end{equation*}
where $\lambda_0\in (-2,2)$, $\rho_{sc}$ is defined in (\ref{rho_sc}) and
$\widehat{\xi}=\{\xi_j\}_{j=1}^{2m}$ are real number varying in a
compact set $K\subset \mathbb{R}$.

  In the case of hermitian matrix model, i.e. the matrices with
\[
P_n(d\,H_n)=Z_n^{-1}e^{-n\,\hbox{tr}\,V(H_n)}d\,H_n,
\]
where $V$ is a potential function, the asymptotic behavior of (\ref{F}) is known. Using the method of
orthogonal polynomials, it was shown (see \cite{St-Fy:03},\cite{Br-Hi:00}) that
\begin{multline*}
\dfrac{1}{(n\rho(\lambda_0))^{m^2}}F_{2m}\left(\Lambda_0+\widehat{\xi}/(n\rho(\lambda_0))\right)\\
=C_n\dfrac{e^{mnV(\lambda_0)+\alpha_V(\lambda_0)
\sum\limits_{j=1}^{2m}\xi_j}}{\triangle(\xi_1,..,\xi_m)\triangle(\xi_{m+1},..,\xi_{2m})}\mdet
\left\{\dfrac{\sin(\pi(\xi_i-\xi_{m+j}))}{\pi(\xi_i-\xi_{m+j})}
\right\}_{i,j=1}^m(1+o(1)),\,\,n\to\infty,
\end{multline*}
where $\Lambda_0=(\lambda_0,\ldots,\lambda_0)\in \mathbb{R}^{2m}$,
\begin{equation*}
\alpha_V(\lambda)=\dfrac{V^\prime(\lambda)}{2\rho(\lambda)},
\end{equation*}
$\rho$ is a density of (\ref{rho}), $\lambda_0$ is such that $\rho(\lambda_0)>0$ and $\triangle(x_1,\dots,x_m)$ is the
Vandermonde determinants of $x_1,\dots,x_m$.

 Unfortunately, the method of orthogonal polynomials can not be applied to the general case of hermitian Wigner
Ensembles. Thus, to find the asymptotic behavior of (\ref{F}) other methods should be used. In \cite{Got-K:08} Gotze
and Kosters use the exponential generating function to study this behavior for the second moment, i.e.
for the case $m=1$ in (\ref{F}). In this case
it was shown for matrices (\ref{H}) that
\begin{multline*}
\dfrac{1}{n\rho(\lambda_0)}F_2\left(\lambda_0+\xi_1/(n\rho_{sc}(\lambda_0)),
\lambda_0+\xi_2/(n\rho(\lambda_0))\right)\\
=2\pi\exp\{n(\lambda_0^2-2)/2+\alpha(\lambda_0)(\xi_1+\xi_2)+|\kappa_4|\}\dfrac{\sin(\pi(\xi_1-\xi_2))}{\pi(\xi_1-\xi_2)}(1+o(1)),
\end{multline*}
where
\begin{equation}\label{alpha,kap}
\alpha(\lambda)=\dfrac{\lambda}{2\rho_{sc}(\lambda)}, \quad \kappa_4=\mu_4-3/4,
\end{equation}
and $\mu_4$ is the forth moment of the common probability law $Q$ of entries
$\Im W_{jk}$, $\Re W_{jk}$.

In this paper we consider the general case $m\ge 1$ of (\ref{F}) for the random matrices (\ref{H}).
Set
\begin{eqnarray}\label{D_xi}
D^{(n)}(\xi)&=&\frac{1}{n\rho(\lambda_0)}F_{2}\left(\lambda_0+\frac{\xi}{n\rho_{sc}(\lambda_0)},
\lambda_0+\frac{\xi}{n\rho_{sc}(\lambda_0)}\right)\\ \notag
&=&2\pi \exp\left\{\frac{n}{2}(\lambda_0^2-2)+2\alpha(\lambda_0)\xi+\kappa_4\right\}(1+o(1)).
\end{eqnarray}

  The main result of the paper is
\begin{theorem}\label{thm:1}
Let the entries $\Im W_{jk}$, $\Re W_{jk}$ of matrices (\ref{H}) has the
symmetric probability distribution with $4m$ finite moments. Then we have for $m\ge 1$
\begin{multline*}
\lim\limits_{n\to\infty}\dfrac{1}{(n\rho_{sc}(\lambda_0))^{m^2}\prod\limits_{l=1}^{2m}\sqrt{D^{(n)}(\xi_l)}}
F_{2m}\left(\Lambda_0+\widehat{\xi}/(n\rho_{sc}(\lambda_0))\right)\\
=\dfrac{\exp\{m(m-1)\kappa_4(\lambda_0^2-2)^2/2\}}{\pi^{2m(m-1)}\Delta(\xi_1,...,\xi_m)
\Delta(\xi_{m+1},...,\xi_{2m})}\mdet
\left\{\dfrac{\sin(\pi(\xi_i-\xi_{m+j}))}{\pi(\xi_i-\xi_{m+j})}
\right\}_{i,j=1}^m,
\end{multline*}
where $F_{2m}$ and $\rho_{sc}(\lambda)$ are defined in (\ref{F}) and (\ref{rho_sc}),
$\Lambda_0=(\lambda_0,\ldots,\lambda_0)\in \mathbb{R}^{2m}$,
$\lambda_0\in(-2,2)$, $\widehat{\xi}=\{\xi_j\}_{j=1}^{2m}$, and $\alpha(\lambda)$ and
$\kappa_4$ are defined in (\ref{alpha,kap}).
\end{theorem}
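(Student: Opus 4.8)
\emph{Proof strategy (plan).} The plan is to replace the orthogonal polynomial method, which is unavailable here, by a fermionic (Grassmann) integral representation of the product of determinants followed by a steepest descent analysis. First I would write each factor as a Gaussian Berezin integral,
\[
\prod_{j=1}^{2m}\det(\lambda_j-H)=\int\exp\Big(\sum_{a=1}^{2m}\sum_{p,q=1}^{n}\bar\psi_p^{\,a}(\lambda_a\delta_{pq}-H_{pq})\psi_q^{\,a}\Big)\,\prod_{a,p}d\bar\psi_p^{\,a}d\psi_p^{\,a},
\]
introducing $2m$ independent families of anticommuting variables. Substituting this into (\ref{F}) and using the independence of the entries of $W_n$, the average over $H_n$ factorizes over the index pairs $(p,q)$, and because the law of each entry is symmetric only the even cumulants survive.

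The decisive point is the scaling bookkeeping. Since $H_{pq}=n^{-1/2}W_{pq}$, the $2p$-th cumulant of an entry is of order $n^{-p}$, while the fermionic bilinear $\sum_a\bar\psi_p^{\,a}\psi_q^{\,a}$ to which it couples is summed over the $O(n^2)$ index pairs. Hence the second cumulant produces a contribution of order $n$ (the leading action), the fourth cumulant produces a contribution of order $1$ --- precisely the source of the $\kappa_4$-factor --- and every higher even cumulant contributes a term of order $n^{-1}$ or smaller, which vanishes in the limit. This is exactly where the hypothesis of a symmetric law with $4m$ finite moments is used. After the average, the effective action depends on the fermions only through the $2m\times2m$ Gram matrix $Q_{ab}=\sum_p\bar\psi_p^{\,a}\psi_p^{\,b}$: the second cumulant yields a quartic term $\sim\mathrm{tr}\,Q^2$ and the fourth cumulant a higher, order-one correction. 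I would then decouple the quartic term by a Hubbard--Stratonovich transformation, introducing an auxiliary $2m\times2m$ Hermitian matrix, and integrate out the Grassmann variables, converting $F_{2m}$ into a finite-dimensional matrix integral of the form $\int dQ\,e^{-n\mathcal{F}(Q)}$ times the order-one $\kappa_4$-correction and the fluctuation factors.

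The last step, which I expect to be the main obstacle, is the steepest descent evaluation of this matrix integral as $n\to\infty$. The relevant saddle points are the two branches $r_\pm=\lambda_0/2\pm i\sqrt{1-\lambda_0^2/4}$ of the self-consistent equation for the Stieltjes transform of $\rho_{sc}$; they satisfy $r_++r_-=\lambda_0$ and $r_+r_-=1$, whence $r_+^2+r_-^2=\lambda_0^2-2$. The dominant configuration places the first $m$ flavors on one branch and the remaining $m$ on the other, matching the index structure of the kernel $\{\sin(\pi(\xi_i-\xi_{m+j}))\}_{i,j=1}^m$. Evaluating the order-one fourth-cumulant term at this saddle splits into cross pairs (one flavor on each branch), each contributing $\kappa_4(r_+r_-)^2=\kappa_4$, and same-branch pairs, contributing $\kappa_4[(r_+^2)^2+(r_-^2)^2]=\kappa_4[(\lambda_0^2-2)^2-2]$; the cross contribution is exactly cancelled by the $\kappa_4$-content removed through the normalization $\prod_l\sqrt{D^{(n)}(\xi_l)}$, whose $m=1$ value carries precisely one cross pair, leaving the residual factor $\exp\{m(m-1)\kappa_4(\lambda_0^2-2)^2/2\}$. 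Finally, the Gaussian integral over the directions transverse to the saddle manifold $U(2m)/(U(m)\times U(m))$ reproduces, exactly as in the GUE case $\kappa_4=0$, the ratio of the $\sin$-kernel determinant to the Vandermonde factors $\Delta(\xi_1,\dots,\xi_m)\Delta(\xi_{m+1},\dots,\xi_{2m})$ with the prefactor $\pi^{-2m(m-1)}$. The genuine difficulties are the rigorous isolation of the contributing saddles on this symmetry-induced manifold, the correct parametrization of the transverse fluctuations so that the $\sin$-kernel emerges, and the uniform control of the error terms for $\widehat{\xi}$ in the compact set $K$ together with the justification of the required contour deformations.
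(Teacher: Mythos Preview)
Your overall strategy---Grassmann representation of the product of determinants, average over the independent entries keeping only even cumulants, Hubbard--Stratonovich decoupling of the quadratic-in-$Q$ term, integrating out the fermions, then steepest descent on a finite-dimensional Hermitian-matrix integral---is exactly the route taken in the paper, and your cumulant bookkeeping (second cumulant $\sim n$, fourth $\sim 1$, higher $\sim n^{-1}$) and your saddle-point accounting for the residual $\kappa_4$-factor are both correct.

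The one genuine methodological difference is how the angular degrees of freedom are handled. You propose to treat the saddle as the degenerate manifold $U(2m)/(U(m)\times U(m))$ and recover the $\sin$-kernel from a Gaussian integral over the transverse directions; you rightly flag this as the main technical obstacle. The paper sidesteps this entirely: after the Hubbard--Stratonovich step it diagonalizes $Q=U^{*}TU$ and applies the Harish--Chandra/Itzykson--Zuber formula to perform the $U(2m)$ integral \emph{exactly}, reducing everything to a $2m$-fold eigenvalue integral carrying an explicit Vandermonde factor $\triangle(T)/\triangle(\widehat{\xi})$. Steepest descent is then done only on the eigenvalues, each of which independently localizes at $x_{\pm}$, and the $\sin$-kernel determinant drops out of elementary Vandermonde/Cauchy identities applied to the sum over the $\binom{2m}{m}$ assignments of $\pm$'s. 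This buys a substantially cleaner rigorous argument: no parametrization of a coset manifold, no delicate integration over soft modes, and uniform control of the error terms is straightforward because the integrand factorizes over the $t_l$. Your approach is the physicists' sigma-model calculation and would work, but making it rigorous requires exactly the machinery you list as difficulties; the HCIZ trick eliminates all of them at once.
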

The theorem shows that the above limit for the mixed moments of characteristic polynomials
for random matrices (\ref{H}) coincide with those for the GUE up to a
factor, depending only on the forth moment of the common probability law $Q$ of entries
$\Im W_{jk}$, $\Re W_{jk}$, i.e. that the higher moments of $Q$ do not
contribute to the above limit. This is a manifestation of universality of the limit,
that can be composed with universality of the local bulk regime for Wigner matrices (see \cite{ErTao:09}).

The paper is organized as follows. In Section $2$ we obtain a convenient integral representation for $F_{2m}$
in the case of symmetric probability distribution of entries with $4m$ finite moments by using the integration
over the Grassmann variables and Harish Chandra/Itzykson-Zuber formula for integrals
over the unitary group. In Section $3$ we prove Theorem \ref{thm:1} by applying the steepest descent method
to the integral representation.

We denote by $C, C_1$, etc. and $c, c_1$, etc. various $n$-independent constants below, which
can be different in different formulas. Integrals without limits denote the integrals over whole real
axis.

\section{The integral representation.}
In this section we obtain the integral representation for the correlation functions $F_{2m}$ of (\ref{F})
of characteristic polynomials. To this end we use the integration
over the Grassmann variables. The integration was introduced by Berezin
and widely used in the physics literature (see \cite{Ber} and \cite{Ef}).
For the reader convenience we give an outline of this technique here.

Let us consider the two sets of formal variables
$\{\psi_j\}_{j=1}^n,\{\overline{\psi}_j\}_{j=1}^n$, which satisfy the following anticommutation
conditions
\[
\psi_j\psi_k+\psi_k\psi_j=\overline{\psi}_j\psi_k+\psi_k\overline{\psi}_j=\overline{\psi}_j\overline{\psi}_k+
\overline{\psi}_k\overline{\psi}_j=0,\quad j,k=1,..,n.
\]
 In particular, for $k=j$ we obtain
\[
\psi_j^2=\overline{\psi}_j^2=0.
\]
These two sets of variables $\{\psi_j\}_{j=1}^n$ and
$\{\overline{\psi}_j\}_{j=1}^n$ generate the Grassmann algebra
$\Lambda$. Taking into account that $\psi_j^2=0$, we have that all elements
of $\Lambda$ are polynomials of $\{\psi_j\}_{j=1}^n$ and
$\{\overline{\psi}_j\}_{j=1}^n$. We can also define functions of Grassmann variables. Let $\chi$ be
an element of $\Lambda$. For any analytical function $f$ by
$f(\chi)$ we mean the element of $\Lambda$ obtained by
substituting $\chi$ in the Taylor series of $f$ near zero. Since
$\chi$ is a polynomial of $\{\psi_j\}_{j=1}^n$, $\{\overline{\psi}_j\}_{j=1}^n$,
there exists such $l$ that $\chi^l=0$, and hence the series
terminates after a finite number of terms and so $f(\chi)\in
\Lambda$.

Following Berezin \cite{Ber}, we define the operation of
integration with respect to the anticommuting variables in a formally
way:
\[
\intd d\,\psi_j=\intd d\,\overline{\psi}_j=0,\quad \intd
\psi_jd\,\psi_j=\intd \overline{\psi}_jd\,\overline{\psi}_j=1.
\]
This definition can be extend on the general element of $\Lambda$ by
the linearity. A multiple integral is defined to be repeated
integral. The "differentials" $d\,\psi_j$ and
$d\,\overline{\psi}_k$ anticommute with each other and with the
variables $\psi_j$ and $\overline{\psi}_k$.

Therefore, if
$$f(\chi_1,\ldots,\chi_m)=a_0+\sum\limits_{j_1=1}^m
a_{j_1}\chi_{j_1}+\sum\limits_{j_1<j_2}a_{j_1j_2}\chi_{j_1}\chi_{j_2}+
\ldots+a_{1,2,\ldots,m}\chi_1\ldots\chi_m,
$$
then
\[
\intd f(\chi_1,\ldots,\chi_m)d\,\chi_m\ldots d\,\chi_1=a_{1,2,\ldots,m}.
\]

   Let $A$ be an ordinary hermitian matrix. The following Gaussian integral is well-known
\begin{equation}\label{G_C}
\intd \exp\left\{-\sum\limits_{j,k=1}^nA_{j,k}z_j\overline{z}_k\right\}
\prod\limits_{j=1}^n\dfrac{d\,\Re z_jd\,\Im z_j}{\pi}=\dfrac{1}{\mdet A}.
\end{equation}
One of the important formulas of the Grassmann variables theory is
an analog of formula (\ref{G_C}) for Grassmann algebra (see \cite{Ber}):
\begin{equation}\label{G_Gr}
\int \exp\left\{\sum\limits_{j,k=1}^nA_{j,k}\overline{\psi}_j\psi_k\right\}
\prod\limits_{j=1}^nd\,\overline{\psi}_jd\,\psi_j=\mdet A,
\end{equation}
where $\{\psi_j\}_{j=1}^n$ and $\{\overline{\psi}_j\}_{j=1}^n$ are the Grassmann variables.
Besides, we have
\begin{equation}\label{G_Gr_1}
\int \prod\limits_{p=1}^q\overline{\psi}_{l_p}\psi_{s_p}\exp\left\{\sum\limits_{j,k=1}^nA_{j,k}\overline{\psi}_j
\psi_k\right\}\prod\limits_{j=1}^nd\,\overline{\psi}_jd\,\psi_j=\mdet A_{l_1,..,l_q;s_1,..,s_q},
\end{equation}
where $A_{l_1,..,l_q;s_1,..,s_q}$ is a $(n-q)\times (n-q)$ minor of the matrix $A$ without rows $l_1,..,l_q$ and
columns $s_1,..,s_q$.

\subsection{Asymptotic integral representation for $F_2$.}
In this subsection we obtain the asymptotic integral representation of (\ref{F}) for $m=1$.
The corresponding asymptotic formula was obtained in \cite{Got-K:08} by using the
exponential generating function. We give here a detailed proof based on the Grassmann integration to
show the basic ingredients of our technique. The technique will be elaborated in the next subsection
to obtain the asymptotic integral representation of (\ref{F}) for $m>1$.

Set
\begin{equation}\label{D_2}
D_2=\prod\limits_{l=1}^{2}\sqrt{D^{(n)}(\xi_l)},
\end{equation}
where $D^{(n)}(\xi)$ is defined in (\ref{D_xi}).
Note also that
\begin{equation}\label{D_int}
\sqrt{D^{(n)}(\xi)}=e^{\alpha(\lambda_0)\xi+\kappa_4/2}
\sqrt{\dfrac{n(4-\lambda_0^2)}{2}}\displaystyle\int\left|(t-i\lambda_0/2)^ne^{-\frac{n}{2}
(t+i\lambda_0/2)^2}\right|d\,t(1+o(1))
\end{equation}
as $n\to\infty$.

Using (\ref{G_Gr}), we obtain from (\ref{F})
\begin{equation}\label{ne_usr_2}
\begin{array}{c}
D_2^{-1}F_2(\Lambda)=D_2^{-1}{\bf
E}\left\{\displaystyle\int e^{\sum\limits_{l=1}^2\sum\limits_{j,k=1}^n(\lambda_l-H)_{j,k}
\overline{\psi}_{jl}\psi_{kl}}\prod\limits_{r=1}^2\prod\limits_{q=1}^nd\,\overline{\psi}_{qr}d\,\psi_{qr}\right\}\\
=D_2^{-1}{\bf E}\left\{\displaystyle\int e^{\sum\limits_{s=1}^2\lambda_s\sum\limits_{p=1}^n
\overline{\psi}_{ps}\psi_{ps}}
\exp\left\{-\sum\limits_{j<k}\dfrac{\Re w_{j,k}}{\sqrt{n}}\sum\limits_{l=1}^2(\overline{\psi}_{jl}\psi_{kl}
+\overline{\psi}_{kl}\psi_{jl})\right.\right.\\
\left.\left.-\sum\limits_{j<k}\dfrac{i\Im w_{j,k}}{\sqrt{n}}\sum\limits_{l=1}^2(\overline{\psi}_{jl}\psi_{kl}-\overline{\psi}_{kl}
\psi_{jl})-\sum\limits_{j=1}^n\frac{w_{jj}}{\sqrt{n}}\sum\limits_{l=1}^2\overline{\psi}_{jl}\psi_{jl}\right\}
\prod\limits_{r=1}^2\prod\limits_{q=1}^nd\,\overline{\psi}_{qr}d\,\psi_{qr}\right\},
\end{array}
\end{equation}
where $\{\psi_{jl}\}_{j,l=1}^{n \, 2}$ are the
Grassmann variables ($n$ variables for each determinant in (\ref{F})).
Denote
\begin{eqnarray}\label{hi_2}
\chi^+_{j,k}&=&\sum\limits_{l=1}^2(\overline{\psi}_{jl}\psi_{kl}
+\overline{\psi}_{kl}\psi_{jl}),\quad \chi^-_{j,k}=\sum\limits_{l=1}^2(\overline{\psi}_{jl}\psi_{kl}
-\overline{\psi}_{kl}\psi_{jl}),\,\,j\ne k,\\ \notag
\chi^+_{j,j}&=&\sum\limits_{l=1}^2\overline{\psi}_{jl}\psi_{jl},\quad j,k=1,..,n.
\end{eqnarray}
Using that $(\chi^{\pm}_{j,k})^s=0$ for $s>4$, $j,k=1,..,n$ (since $\overline{\psi}_{js}^2=\psi_{js}^2=0$ for
any $j=1,..,n$, $s=1,2$), we expand the second exponent under the integral in (\ref{ne_usr_2})
into the series and integrate with respect
to the measure (\ref{W}). We get then
\begin{equation}\label{usr_2}
\begin{array}{c}
D_2^{-1}F_2(\Lambda)
=D_2^{-1}\displaystyle
\int e^{\sum\limits_{s=1}^2\lambda_s\sum\limits_{p=1}^n\overline{\psi}_{ps}\psi_{ps}}
\displaystyle\prod\limits_{j<k}\left(1+\dfrac{(\chi^+_{j,k})^2}{4n}
+\dfrac{\mu_4}{4!n^2}(\chi^+_{j,k})^4\right)\\
\displaystyle\prod\limits_{j<k}\left(1-\dfrac{(\chi^-_{j,k})^2}{4n}
+\dfrac{\mu_4}{4!n^2}(\chi^-_{j,k})^4\right)
\displaystyle\prod\limits_{j=1}^n\left(1+\dfrac{1}{2n}(\chi^+_{j,j})^2\right)\prod\limits_{r=1}^2
\prod\limits_{q=1}^nd\,\overline{\psi}_{qr}
d\,\psi_{qr},
\end{array}
\end{equation}
where $\mu_4$ is $4$-th moment of the common probability law $Q$ of the entries
$\Im W_{jk}$, $\Re W_{jk}$ of (\ref{W}). Note that
\begin{eqnarray*}
1\pm\dfrac{1}{4n}(\chi^\pm_{j,k})^2
+\dfrac{\mu_4}{4!n^2}(\chi^\pm_{j,k})^4&=&\exp\left\{\pm\dfrac{1}{4n}(\chi^\pm_{j,k})^2
+\dfrac{\kappa_4}{4!n^2}(\chi^\pm_{j,k})^4\right\},\,\,\,j\ne k,\\
1+\dfrac{1}{2n}(\chi^+_{j,j})^2
&=&\exp\{\dfrac{1}{2n}(\chi^+_{j,j})^2\},\quad j,k=1,..,n,
\end{eqnarray*}
where $\kappa_4$ is defined in (\ref{alpha,kap}).
Thus, (\ref{usr_2}) yields
\begin{equation}\label{ne_fact_2}
D_2^{-1}F_2(\Lambda)=D_2^{-1}
\displaystyle\int e^{\sum\limits_{s=1}^2
\lambda_s\sum\limits_{p=1}^n\overline{\psi}_{ps}\psi_{ps}-\frac{1}{2n}\sigma_1
+\frac{\kappa_4}{n^2}\sigma_2}\prod\limits_{r=1}^2\prod\limits_{q=1}^nd\,\overline{\psi}_{qr}d\,\psi_{qr},
\end{equation}
where
\begin{eqnarray}\label{sigma_2}
\sigma_1&=&-\dfrac{1}{2}\sum\limits_{j<k}\left((\chi^+_{j,k})^2-(\chi^-_{j,k})^2\right)-\sum\limits_{j=1}^n(\chi^+_{j,j})^2\\
\notag
&=&\sum\limits_{l=1}^2\left(\sum\limits_{j=1}^n\overline{\psi}_{jl}\psi_{jl}\right)^2+
2\sum\limits_{1\le l<s\le 2}\left(\sum\limits_{j=1}^n\overline{\psi}_{jl}\psi_{js}\cdot
\sum\limits_{k=1}^n\overline{\psi}_{ks}\psi_{kl}\right),\\
\notag
\sigma_2&=&\dfrac{1}{4!}\sum\limits_{j<k}\left((\chi^+_{j,k})^4+(\chi^-_{j,k})^4\right)=\left(\sum\limits_{j=1}^n\overline{\psi}_{j1}\overline{\psi}_{j2}\psi_{j1}\psi_{j2}\right)^2.
\end{eqnarray}
Now we use the formulas
\begin{eqnarray}\label{gauss}
\sqrt{\dfrac{\pi}{a}}\exp\{ab^2\}&=&\displaystyle\int\exp\{-ax^2-2abx\}d\,x,\\ \notag
\dfrac{\pi}{a}\exp\{abc\}&=&\displaystyle\int\exp\{-a\overline{u}u-ab\overline{u}-acu\}d\,\Re u d\,\Im u,
\end{eqnarray}
where $b,c$ are complex numbers or even Grassmann variables (i.e. sums of the products of even number
of Grassmann variables), and $a$ is a positive number. For the case of even Grassmann variables this
formulas can be obtained by expanding the exponent into the series and integrating of each term.
Therefore, (\ref{sigma_2}) -- (\ref{gauss}) imply
\begin{equation}\label{s1_2}
\begin{array}{c}
\exp\left\{-\dfrac{1}{2n}\sigma_1\right\}=\dfrac{n^2}{2\pi^2}\int\limits_{\mathcal{H}_2}\displaystyle
\exp\left\{-\dfrac{n}{2}\left(\sum\limits_{q=1}^2\tau_q^2
+2\sum\limits_{1\le a<b\le 2}\overline{u}_{ab}u_{ab}\right)\right\}\\
\prod\limits_{j=1}^n
\exp\left\{\sum\limits_{p=1}^2i\tau_p\overline{\psi}_{jp}\psi_{jp}
+\sum\limits_{1\le c<d\le 2}\left(iu_{cd}\overline{\psi}_{jc}\psi_{jd}+
i\overline{u}_{cd}\overline{\psi}_{jd}\psi_{jc}\right)\right\}d\,Q,
\end{array}
\end{equation}
where
\begin{equation}\label{Q_2}
Q=
\left(
\begin{array}{cc}
\tau_1&u_{12}\\
\overline{u}_{12}&\tau_2\\
\end{array}\right),
\end{equation}
$\mathcal{H}_2$ is the space of $2\times 2$ hermitian matrices and $d\,Q$ is given in (\ref{dH}) for $n=2$.
%\begin{equation}\label{dQ}
%dQ=\prod\limits_{l=1}^2d\,\tau_l\prod\limits_{1\le r<s\le 2}d\,\Re u_{rs}d\,\Im u_{rs}.
%\end{equation}
Write the formula
\begin{equation}\label{s2_2}
\exp\left\{\dfrac{\kappa_4}{n^2}\sigma_2\right\}
=\sqrt{\dfrac{|\kappa_4|}{\pi}}\displaystyle\int \exp\{-|\kappa_4|p^2\}\\
\prod\limits_{j=1}^n
\exp\left\{\dfrac{2p\,\varepsilon(\kappa_4)}{n}\overline{\psi}_{j1}\overline{\psi}_{j2}\psi_{j1}\psi_{j2}
\right\}d\,p
\end{equation}
with
\begin{equation}\label{eps}
\varepsilon(x)=\left\{
\begin{array}{ll}
x,& x>0,\\
-ix,& x<0.
\end{array}
\right.
\end{equation}

Substituting (\ref{s1_2}) -- (\ref{s2_2}) in (\ref{ne_fact_2}) and using (\ref{G_Gr}) -- (\ref{G_Gr_1}) we can
integrate in (\ref{ne_fact_2}) over the Grassmann variables. We obtain
\begin{equation}\label{F_int11_2}
%\begin{array}{c}
D_2^{-1}F_2(\Lambda)=Z_2
\displaystyle\int\,d\,p \displaystyle\int\limits_{\mathcal{H}_2}\,d\,Q e^{-\frac{n}{2}\hbox{tr}\,Q^2-|\kappa_4|p^2}
\left(\mdet(Q-i\Lambda)+\dfrac{2p\,\varepsilon(\kappa_4)}{n}\right)^n,
%\end{array}
\end{equation}
where $Q$ is defined in (\ref{Q_2}) and
\begin{equation}\label{lambda_2}
\Lambda=
\left(
\begin{array}{cc}
\lambda_1&0\\
0&\lambda_2\\
\end{array}
\right),\quad
Z_2=\dfrac{(-1)^n n^2D_2^{-1}}{2\pi^2\sqrt{\pi|\kappa_4|^{-1}}}.
\end{equation}
Recall that we are interested in $\Lambda=\Lambda_0+\widehat{\xi}/n\rho_{sc}(\lambda_0)$, where
$\Lambda_0=\hbox{diag}\{\lambda_0,\lambda_0\}$ and
$\widehat{\xi}=\hbox{diag}\{\xi_1,\xi_{2}\}$.
Change variables to $\tau_j-i\lambda_0/2-i\xi_j/n\rho_{sc}(\lambda_0)\to \tau_j$, $j=1,2$ and
note that we can move the integration with respect to $\tau_j$ from line $\Im z=\lambda_0/2+\xi_j/n\rho_{sc}(\lambda_0)$
back to the real axis. Indeed, consider the contour $C_{jR}$, which is the rectangle with vertices at $(-R,0)$,
$(-R,\lambda_0/2+\xi_j/n)$, $(R,\lambda_0/2+\xi_j/n)$ and $(R,0)$.
Since the integrand in (\ref{F_int11_2}) is analytic in $\{\tau_j\}_{j=1}^2$, the integral
with respect to $\tau_j$ of this function over $C_{jR}$ is equal to $0$. Besides, the integral over the segments
of lines $\Im z=\pm R$ tends to $0$ as $R\to\infty$, since the integrand in (\ref{F_int11_2})
is a polynomial of $\tau_j$ multiplied by $\exp\{-n\tau_j^2/2\}$. Thus, setting $R\to \infty$, we obtain
that the integral with respect to $\tau_j$ over the line $\Im z=\lambda_0/2+\xi_j/n$ is equal to
the integral over the real axis.
Hence, we obtain in new variables
\begin{equation}\label{F_int1_2}
\begin{array}{c}
D_2^{-1}F_2(\Lambda)=Z_2
\displaystyle\int d\,p\displaystyle\int\limits_{\mathcal{H}_2}d\,Q  e^{-\frac{n}{2}
\hbox{tr}\,(Q+\frac{i\Lambda_0}{2})^2-\frac{i}{\rho_{sc}(\lambda_0)}\hbox{tr}(Q+\frac{i\Lambda_0}{2})\widehat{\xi}
-|\kappa_4|p^2-\frac{1}{2n}\hbox{tr}\,\frac{\widehat{\xi}^2}{\rho_{sc}(\lambda_0)^2}}\\
\times\left(\mdet(Q-\frac{i\Lambda_0}{2})+\dfrac{2p\,\varepsilon(\kappa_4)}{n}\right)^n
=Z_2\displaystyle\int d\,p\displaystyle\int\limits_{\mathcal{H}_2} d\,Q \exp\left\{-\frac{i}{\rho_{sc}(\lambda_0)}
\hbox{tr}(Q+\frac{i\Lambda_0}{2})\widehat{\xi}\right.\\
\left.-|\kappa_4|p^2-\frac{1}{2n}\hbox{tr}\,\frac{\widehat{\xi}^2}{\rho_{sc}(\lambda_0)^2}\right\}
 \mu_n(Q)\left(1+\dfrac{2p\,\varepsilon(\kappa_4)}{n\mdet(Q-i\Lambda_0/2)}\right)^n,
\end{array}
\end{equation}
where $Q$ is again the hermitian (see (\ref{Q_2})) and
\begin{equation}\label{mu}
\mu_n(Q)=\mdet^n(Q-i\Lambda_0/2)e^{-\frac{n}{2}
\hbox{tr}\,(Q+i\Lambda_0/2)^2}.
\end{equation}
Let $q_1,q_2$ be the eigenvalues of $Q$. Set
\begin{equation}\label{Om_n}
\begin{array}{c}
\widetilde{\Omega}_{n}=\{(Q,p):a\le |q_l-i\lambda_0/2|\le A, l=1,2,\,|p|\le \log n\}, \\
\widetilde{\Omega}_n^Q=\{Q\in \mathcal{H}_2: a\le |q_l-i\lambda_0/2|\le A\}.
\end{array}
\end{equation}
for sufficiently small $a$ and sufficiently big $A$ (note that if $|\lambda_0|\ge \delta$, then
$|q_l-i\lambda_0/2|\ge \delta^2/4$ and
we can omit the first inequality in (\ref{Om_n})).
Note that the integral in (\ref{F_int1_2}) over the domain $\max\limits_{l=1,2}|q_l|\ge A$
 is $O(e^{-nA^2/4})$, $A\to\infty$ and the integral over the domain $\min\limits_{l=1,2} |q_l|\le a$ is
$O(e^{-n\log a^{-1}})$, $a\to 0$. If $a \le |q_l-i\lambda_0/2|\le A$ and $|p|\ge \log n$,
then according to (\ref{D_2}), (\ref{D_int})  and (\ref{Q_2}), the corresponding integral is bounded by
\begin{equation}\label{bound_vne}
Z_2\displaystyle\int\limits_{\widetilde{\Omega}_n^Q} |\mu_n(Q)| d\,Q\displaystyle\int\limits_{|p|\ge \log n}(1+Cp/n)^n
e^{-|\kappa_4|p^2}d\,p=O(e^{-C\log^2n }),
\end{equation}
and we can write
\begin{equation}\label{F_int2_2}
\begin{array}{c}
D_2^{-1}F_{2}(\Lambda)=Z_2\displaystyle\int\limits_{\widetilde{\Omega}_{n}}
e^{-i\hbox{tr}(Q+\frac{i\Lambda_0}{2})\frac{\widehat{\xi}}{\rho_{sc}(\lambda_0)}-|\kappa_4|p^2
-\frac{1}{2n}\hbox{tr}\,\frac{\widehat{\xi}^2}{\rho_{sc}(\lambda_0)^2}
+2p\,\varepsilon(\kappa_4)\,\mdet^{-1}(Q-\frac{i\Lambda_0}{2})}\\
\times\mu_n(Q)\left(1+
f_n(\mdet(Q-i\Lambda_0/2),p)\right)d\,p\,d\,Q+O(e^{-c\log^2 n}),
\end{array}
\end{equation}
where
\begin{equation}\label{f_n_2}
f_n(\mdet(Q-i\Lambda_0/2),p)=e^{-2p\,\varepsilon(\kappa_4)\,\mdet^{-1}(Q-\frac{i\Lambda_0}{2})}
\left(1+\dfrac{2p\,\varepsilon(\kappa_4)}{n\,\mdet (Q-\frac{i\Lambda_0}{2})}\right)^n-1.
\end{equation}
Note that $f_n$ is an analytic function of $p$ and entries of $Q$, and we have on $\widetilde{\Omega}_n$
\begin{equation}\label{bound_f_2}
|f_n(\mdet(Q-i\Lambda_0/2),p)|\le \dfrac{\log^k n}{n},
\end{equation}
where $k$ is independent of $n$. It is easy to check that
\begin{equation*}
I:=\displaystyle\int\limits_{|p|\le \log n}e^{-|\kappa_4|p^2+2p\,\varepsilon(\kappa_4)\,
\mdet^{-1}(Q-\frac{i\Lambda_0}{2})}d\,p=
\sqrt{\dfrac{\pi}{|\kappa_4|}}e^{\kappa_4\,
\mdet^{-2}(Q-i\Lambda_0/2)}+O(e^{-c\log^2 n}),
\end{equation*}
and we obtain that $|I|>C_3>0$ on $\widetilde{\Omega}_n$ (see (\ref{Om_n})).
Thus, (\ref{F_int2_2}) yields
\begin{equation}\label{F_int2}
\begin{array}{c}
D_2^{-1}F_{2}(\Lambda)=\dfrac{n^2D_2^{-1}}
{(-1)^n2\pi^{2}}\int\limits_{\widetilde{\Omega}_n^Q} \mu_n(Q)\exp\left\{-
i\hbox{tr}(Q+i\Lambda_0/2)\widehat{\xi}/\rho_{sc}(\lambda_0)\right.\\ \notag
\left.+\kappa_4\,\mdet^{-2}(Q-i\Lambda_0/2)\right\}
\left(1+f^{(1)}_n(\mdet(Q-i\Lambda_0/2))\right)d\,Q+O(e^{-c\log^2n}),
\end{array}
\end{equation}
where
\begin{multline}\label{f-1_2}
f^{(1)}_n\left(\mdet\left(Q-i\Lambda_0/2\right)\right)=e^{-\frac{1}{2n}\hbox{tr}\,\frac{\widehat{\xi}^2}{\rho_{sc}(\lambda_0)^2}}-1\\
+I^{-1}e^{-\frac{1}{2n}\hbox{tr}\,\frac{\widehat{\xi}^2}{\rho_{sc}(\lambda_0)^2}}\displaystyle\int\limits_{|p|\le\log n} e^{-|\kappa_4|p^2+2p\,\varepsilon(\kappa_4)\,
\mdet^{-1}(Q-i\Lambda_0/2)}
f_n(\mdet(Q-i\Lambda_0/2),p)d\,p.
\end{multline}
According to (\ref{bound_f_2}), we get that $f^{(1)}_n(\mdet(Q-i\Lambda_0/2))$ is
analytic in elements of $Q$ on $\widetilde{\Omega}_n^Q$ and
\begin{equation}\label{bound_f_1}
 |f^{(1)}_n(\mdet(Q-i\Lambda_0/2))|\le \log^k n/n,
\end{equation}
where $k$ is independent of $n$.

Let us change variables to $Q=U^{*}T U$, where $U$ is a unitary matrix and
$T=\hbox{diag}\{t_1,t_2\}$. Then $d\,Q$ of (\ref{dH}) for $n=2$ transforms
to $(t_1-t_2)^2d\,t_1\,d\,t_2 d\,\mu(U)$, where
$\mu(U)$ is the normalized to unity Haar measure on the unitary group
$U(2)$ (see e.g. \cite{Me:91}, Section 3.3). Hence, since functions $\mdet (Q-i\Lambda_0/2)$
and $\hbox{tr}\,(Q+i\Lambda_0/2)^2$
are unitary invariant, (\ref{F_int2}) implies
\begin{eqnarray}\label{F_int4_2}
%\begin{array}{c}
D_2^{-1}F_{2}(\Lambda)&=&\dfrac{n^2(-1)^n}
{2\pi^{2}D_2}\int\limits_{U(2)}d\mu(U)\int\limits_{L_a^A\times L_a^A}d\,t_1d\,t_2
\prod\limits_{l=1}^{2}\left(t_l-i\lambda_0/2\right)^n\\ \notag
&\times&e^{-\frac{n}{2}\sum\limits_{s=1}^{2}(t_s+\frac{i\lambda_0}{2})^2-
\hbox{tr}\,U^*(T+\frac{i\Lambda_0}{2})U\frac{i\widehat{\xi}}{\rho_{sc}(\lambda_0)}+
\kappa_4\prod\limits_{r=1}^{2}(t_r-\frac{i\lambda_0}{2})^{-2}}\\ \notag
&\times&\left(1+f^{(1)}_n(\mdet(T-\frac{i\Lambda_0}{2}))\right)
+O(e^{-c\log^2n}),
%\end{array}
\end{eqnarray}
where
\begin{equation}\label{L}
L_a^A=\{t\in \mathbb{R}:a\le |t-i\lambda_0/2|\le A\}.
\end{equation}
The integral over the unitary group $U(2)$ can be computed using the
well-known Harish Chandra/Itsykson-Zuber formula (see e.g. \cite{Me:91}, Appendix 5)
\begin{proposition}\label{p:Its-Z}
Let $A$ be the normal $n\times n$ matrix with distinct eigenvalues $\{a_i\}_{i=1}^n$ and
$B=\hbox{diag}\{b_1,\ldots,b_n\}$. Then we have
%Let $A$ and $B$ be normal $n\times n$ matrices with distinct eigenvalues
%$\{a_i\}_{i=1}^n$, $\{b_i\}_{i=1}^n$, correspondingly. Then we
%have
\begin{multline}\label{Its-Zub}
\int\limits_{U(n)}\int \exp\{-\dfrac{1}{2}\hbox{tr} (A-U^*BU)^2\} \triangle^2(B)f(B)d\,Ud\,B\\
=\pi^{n/2}
\int \exp\{-\dfrac{1}{2}\hbox{tr} (a_j-b_j)^2\}\dfrac{\triangle(B)}{\triangle(A)} f(b_1,\ldots,b_n)
d\,B,
%C_n\dfrac{\det[\exp\{a_ib_j\}]_{i,j=1}^n}{\triangle(A)\triangle(B)},
\end{multline}
where $f(B)$ is any symmetric function of $\{b_j\}_{j=1}^n$, $d\,B=\prod\limits_{j=1}^nd\,b_j$ and $\triangle(A)$, $\triangle(B)$ are Vandermonde determinants
for the eigenvalues
$\{a_i\}_{i=1}^n$, $\{b_i\}_{i=1}^n$ of $A$ and $B$.
\end{proposition}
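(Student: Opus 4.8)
The plan is to reduce the statement to the single-exponential Harish-Chandra/Itzykson-Zuber formula, which I take as the known input. First I would expand the square in the exponent,
\[
\mathrm{tr}(A-U^*BU)^2=\mathrm{tr}\,A^2+\mathrm{tr}\,B^2-2\,\mathrm{tr}(AU^*BU),
\]
using $\mathrm{tr}(U^*BU)^2=\mathrm{tr}\,B^2$. Since $A$ is normal it factors as $A=V^*\mathrm{diag}(a_1,\dots,a_n)V$ with $V$ unitary, and the substitution $U\mapsto UV^*$, which leaves the Haar measure invariant, lets me assume from the outset that $A=\mathrm{diag}(a_1,\dots,a_n)$. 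For complex eigenvalues the single-exponential formula is read off by analytic continuation in the $a_i$, which is legitimate as long as the $a_i$ stay distinct so that $\triangle(A)\ne 0$.

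Next I would carry out the $U$-integration alone. By the standard identity $\int_{U(n)}e^{\mathrm{tr}(AU^*BU)}dU=c_n\,\mdet[e^{a_ib_j}]_{i,j=1}^n/(\triangle(A)\triangle(B))$, pulling the factor $e^{-a_i^2/2}$ out of the $i$-th row and $e^{-b_j^2/2}$ out of the $j$-th column of the determinant absorbs the Gaussian prefactor $e^{-\frac12\mathrm{tr}A^2-\frac12\mathrm{tr}B^2}$, since $e^{-\frac12 a_i^2}e^{a_ib_j}e^{-\frac12 b_j^2}=e^{-\frac12(a_i-b_j)^2}$. This yields the clean intermediate identity
\[
\int_{U(n)}\exp\Big\{-\tfrac12\mathrm{tr}(A-U^*BU)^2\Big\}\,dU
=\frac{c_n}{\triangle(A)\triangle(B)}\,\mdet\big[e^{-\frac12(a_i-b_j)^2}\big]_{i,j=1}^n.
\]
Multiplying by $\triangle^2(B)f(B)$ cancels one power of $\triangle(B)$, leaving $\triangle(B)$ against the determinant.

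The key algebraic step is the $B$-integration. Expanding $\mdet[e^{-\frac12(a_i-b_j)^2}]=\sum_{\sigma}\mathrm{sgn}(\sigma)\prod_i e^{-\frac12(a_i-b_{\sigma(i)})^2}$ and relabelling the integration variables $b_j$ according to $\sigma$, I would use that $dB$ and $f(B)$ are symmetric while $\triangle(B)$ is antisymmetric: under the relabelling $\triangle(B)$ produces a factor $\mathrm{sgn}(\sigma)$ which cancels the sign in the determinant expansion, so each of the $n!$ permutations contributes identically. The determinant thus collapses to $n!$ times its diagonal term, producing $\prod_j e^{-\frac12(a_j-b_j)^2}=\exp\{-\frac12\sum_j(a_j-b_j)^2\}$ and giving the right-hand side up to the overall constant $c_n\,n!$.

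The main obstacle is the constant bookkeeping. The value of $c_n\,n!$ depends on the normalization adopted for $dU$, that is, on whether one uses the probability Haar measure or the measure arising in the eigenvalue decomposition $dM=\triangle^2(B)\,dB\,dU$ of a Hermitian matrix; matching it to the stated $\pi^{n/2}$ requires combining the Itzykson-Zuber constant, the symmetrization factor $n!$, and the volume constant of the diagonalization map. For the case $n=2$ actually used in (\ref{F_int4_2}) this amounts to the direct check that the prefactor equals $\pi$.
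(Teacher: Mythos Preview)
The paper does not prove Proposition~\ref{p:Its-Z} at all: it is stated as the ``well-known Harish Chandra/Itzykson--Zuber formula'' with a reference to Mehta's book (\cite{Me:91}, Appendix~5), and is then used as a black box in passing from (\ref{F_int4_2}) to (\ref{F_int_2}) and from (\ref{F_int4_m}) to (\ref{F_int5_m}). So there is no proof in the paper to compare your argument against.

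That said, your sketch is the standard and correct way to derive this integrated form from the basic HCIZ identity: expand the square, diagonalize $A$ by the left-invariance of Haar measure, apply the determinantal formula for $\int_{U(n)}e^{\mathrm{tr}(AU^*BU)}dU$, absorb the Gaussian prefactors row- and columnwise into the determinant, and then symmetrize over the $b_j$ to collapse the permutation sum. Each step is sound. Your caveat about the constant is exactly right and is the only place where care is needed: the $\pi^{n/2}$ on the right depends on the precise normalization of $dU$ (the paper elsewhere says ``normalized to unity Haar measure''), and matching it requires combining the HCIZ constant $c_n$ with the $n!$ from symmetrization. For the purposes of this paper only the cases $n=2$ and $n=2m$ are used, and in practice one fixes the constant once by a direct Gaussian check, as you suggest.
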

Hence, we obtain finally from (\ref{F_int4_2})
\begin{equation}\label{F_int_2}
\begin{array}{c}
D_2^{-1}F_{2}(\Lambda)=\dfrac{i\rho_{sc}(\lambda_0)n^2}
{2\pi(-1)^nD_2}\int\limits_{L_a^A\times L_a^A} \prod\limits_{l=1}^2(t_l-i\lambda_0/2)^n
e^{-\frac{n}{2}\sum\limits_{l=1}^{2}(t_l+\frac{i\lambda_0}{2})^2-\sum\limits_{l=1}^{2}\frac{i\xi_l}{\rho_{sc}(\lambda_0)}
(t_l+\frac{i\lambda_0}{2})}\\
\dfrac{t_1-t_2}{\xi_1-\xi_2}e^{\kappa_4(t_1-i\lambda_0/2)^{-2}(t_2-i\lambda_0/2)^{-2}}
\left(1+f^{(2)}_n(T)\right)
d\,t_1\,d\,t_2+O(e^{-c\log^2n}),
\end{array}
\end{equation}
where $L_a^A$ is defined in (\ref{L}) and $f^{(2)}_n(T)=f^{(1)}_n(\mdet(T-i\Lambda_0/2))$ is an
analytic function bounded by $\log^kn/n$ if $t_l\in L_a^A$, $l=1,2$.

  This asymptotic integral representation is used in the section 3 to prove the theorem for $m=1$.

\subsection{Asymptotic integral representation for $F_{2m}$.}
Set
\begin{equation}\label{D_2m}
D_{2m}=\prod\limits_{l=1}^{2m}\sqrt{D^{(n)}(\xi_l)},
\end{equation}
where $D^{(n)}(\xi)$ is defined in (\ref{D_xi}).
Using (\ref{G_Gr}), we obtain from (\ref{F}) (cf. (\ref{ne_usr_2}))
\begin{equation}\label{ne_usr_m}
\begin{array}{c}
D^{-1}_{2m}F_{2m}(\Lambda)=D^{-1}_{2m}{\bf
E}\left\{\displaystyle\int e^{\sum\limits_{l=1}^{2m}\sum\limits_{j,k=1}^n(\lambda_l-H)_{j,k}
\overline{\psi}_{jl}\psi_{kl}}\prod\limits_{r=1}^{2m}\prod\limits_{q=1}^nd\,\overline{\psi}_{qr}d\,\psi_{qr}\right\}\\
=D^{-1}_{2m}
{\bf E}\left\{\displaystyle\int e^{\sum\limits_{s=1}^{2m}\lambda_s\sum\limits_{p=1}^n\overline{\psi}_{ps}\psi_{ps}
-
\sum\limits_{j<k}\frac{\Re w_{j,k}}{\sqrt{n}}\sum\limits_{l=1}^{2m}(\overline{\psi}_{jl}
\psi_{kl}+\overline{\psi}_{kl}\psi_{jl})}\right.\\
\left.e^{-\sum\limits_{j<k}\frac{i\Im w_{j,k}}{\sqrt{n}}\sum\limits_{l=1}^{2m}
(\overline{\psi}_{jl}\psi_{kl}-\overline{\psi}_{kl}
\psi_{jl})-\sum\limits_{j=1}^n\frac{w_{jj}}{\sqrt{n}}\sum\limits_{l=1}^{2m}\overline{\psi}_{jl}\psi_{jl}}
\prod\limits_{r=1}^{2m}\prod\limits_{q=1}^nd\,\overline{\psi}_{qr}d\,\psi_{qr}\right\},
\end{array}
\end{equation}
where $\{\psi_{jl}\}_{j,l=1}^{n, \, 2m}$ are the
Grassmann variables ($n$ variables for each determinant).
As in (\ref{hi_2}) we denote
\begin{eqnarray}\label{hi_m}
\chi^+_{j,k}&=&\sum\limits_{l=1}^{2m}(\overline{\psi}_{jl}\psi_{kl}
+\overline{\psi}_{kl}\psi_{jl}),\quad \chi^-_{j,k}=\sum\limits_{l=1}^{2m}(\overline{\psi}_{jl}\psi_{kl}
-\overline{\psi}_{kl}\psi_{jl}),\quad j\ne k,\\ \notag
\chi^+_{j,j}&=&\sum\limits_{l=1}^{2m}\overline{\psi}_{jl}\psi_{jl},\quad j,k=1,\ldots,n.
\end{eqnarray}
Using that $(\chi^{\pm}_{j,k})^s=0$ for $s>4m$, $j,k=1,..,2m$ (since $\overline{\psi}_{jl}^2=\psi_{jl}^2=0$ for any $j=1,..,n$,
$l=1,..,2m$), we expand the exponent under the integral in (\ref{ne_usr_m})
into the series and integrate with respect
to the measure (\ref{W}).  We get then similarly to (\ref{ne_fact_2})
\begin{equation}\label{ne_fact_m}
D^{-1}_{2m}F_{2m}(\Lambda)=D^{-1}_{2m}
\displaystyle\int e^{\sum\limits_{s=1}^{2m}
\lambda_s\sum\limits_{k=1}^n\overline{\psi}_{ks}\psi_{ks}-\frac{1}{2n}\sigma_1
+\sum\limits_{p=2}^{2m}\frac{\kappa_{2p}}{n^p}\sigma_p}\prod\limits_{r=1}^{2m}\prod\limits_{q=1}^n
d\,\overline{\psi}_{qr}d\,\psi_{qr},
\end{equation}
where $\kappa_{2p}$ is cumulants of the probability distribution of entries $\Re w_{jk}$, $\Im w_{jk}$
of (\ref{W}), i.e. the coefficients in the expansion
\[
l(t):=\log \mathbf{E}\{e^{it\Re w_{jk} }\}=\sum_{q=0}^{s}
\frac{\kappa_{q}}{q!}(it)^{q}+o(t^{s}),\quad t\rightarrow 0.
\]
The function $\sigma_1$ in (\ref{ne_fact_m}) is the same as in (\ref{sigma_2}) (but with $\chi^\pm_{j,k}$
of (\ref{hi_m}) and the sums from 1 to $2m$ instead of from 1 to 2),
\begin{eqnarray}\label{sigma_m}
%\begin{array}{c}
\sigma_2&=&\dfrac{1}{4!}\sum\limits_{j<k}((\chi^+_{j,k})^4+(\chi^-_{j,k})^4)+
\dfrac{2}{4!}\sum\limits_{j=1}^n(\chi^+_{j,j})^4\\ \notag
&=&2\sum\limits_{l_1<l_2<s_1<s_2}\sum\limits_{j=1}^n\overline{\psi}_{jl_1}\overline{\psi}_{jl_2}\overline{\psi}_{js_1}
\overline{\psi}_{js_2}\cdot
\sum\limits_{k=1}^n\psi_{kl_1}\psi_{kl_2}\psi_{ks_1}\psi_{ks_2}\\ \notag
&+&\dfrac{1}{4}\sum\limits_{l_1\ne s_1, l_2\ne s_2}\sum\limits_{j=1}^n\overline{\psi}_{jl_1}\overline{\psi}_{js_1}
\psi_{jl_2}\psi_{js_2}\cdot\sum\limits_{k=1}^n\psi_{kl_1}\psi_{ks_1}
\overline{\psi}_{kl_2}\overline{\psi}_{ks_2}
%\end{array}
\end{eqnarray}
and for $p\ge 3$ we have
\begin{equation*}
\begin{array}{c}
\sigma_{p}=\dfrac{1}{(2p)!}\sum\limits_{j<k}((\chi^+_{j,k})^{2p}+(-1)^p(\chi^-_{j,k})^{2p})+
\dfrac{2}{(2p)!}\sum\limits_{j<k}(\chi^+_{j,j})^{2p}\\
=\sum\limits_{l_1,..,l_{2p}=1}^{2m}\sum\limits_{s=0}^{[\frac{p}{2}]}c_{s,l}^{(p)}\sum\limits_{j=1}^n
\overline{\psi}_{jl_1}
..\overline{\psi}_{jl_{p+2s}}\psi_{jl_{p+2s+1}}.. \psi_{jl_{2p}}
\cdot\sum\limits_{k=1}^n\psi_{kl_1}..\psi_{kl_{p+2s}}\overline{\psi}_{kl_{p+2s+1}}..\overline{\psi}_{kl_{2p}},
\end{array}
\end{equation*}
where $c_{s,l}^{(p)}$ are $n$-independent positive coefficients and $l=(l_1,...,l_{2p})$.
Using (\ref{gauss}) we have
\begin{equation}\label{s2_m}
\begin{array}{c}
e^{n^{-2}\kappa_4\sigma_2}=C_2^\prime\displaystyle\int
e^{-|\kappa_4|\left(2\sum\limits_{l_1<l_2<s_1<s_2}\overline{w}_{l_1l_2s_1s_2}w_{l_1l_2s_1s_2}
+\sum\limits_{l_1\ne s_1, l_2\ne s_2}\overline{v}_{l_1l_2s_1s_2}v_{l_1l_2s_1s_2}\right)}\\
\displaystyle\prod\limits_{j=1}^n
e^{\frac{\varepsilon(\kappa_4)}{2n}\sum\limits_{a_1\ne b_1, a_2\ne b_2}\left(v_{a_1a_2b_1b_2}
\overline{\psi}_{ja_1}\overline{\psi}_{jb_1}\psi_{ja_2}\psi_{jb_2}+
\overline{v}_{a_1a_2b_1b_2}\psi_{ja_1}\psi_{jb_1}
\overline{\psi}_{ja_2}\overline{\psi}_{jb_2}\right)}\\
\displaystyle\prod\limits_{j=1}^ne^{\frac{2\varepsilon(\kappa_4)}{n}\sum\limits_{c_1<c_2<d_1<d_2}\left(w_{c_1c_2d_1d_2}
\overline{\psi}_{jc_1}\overline{\psi}_{jc_2}\overline{\psi}_{jd_1}
\overline{\psi}_{jd_2}+\overline{w}_{c_1c_2d_1d_2}\psi_{jc_1}\psi_{jc_2}\psi_{jd_1}
\psi_{jd_2}\right)}
d\,W\,d\,V,
\end{array}
\end{equation}
where
\begin{equation}\label{dVdW}
\begin{array}{c}
d\,W=\displaystyle\prod\limits_{l_1<l_2<s_1<s_2}d\,\Re w_{l_1l_2s_1s_2}\,d\,\Im w_{l_1l_2s_1s_2},\\
d\,V=\displaystyle\prod\limits_{l_1\ne s_1, l_2\ne s_2}d\,\Re v_{l_1l_2s_1s_2}\,d\,\Im v_{l_1l_2s_1s_2},\quad
C_2^\prime=\left(\dfrac{\pi}{2|\kappa_4|}\right)^{-\binom{2m}{4}}
\left(\dfrac{\pi}{|\kappa_4|}\right)^{-(2m)^2(2m-1)^2}.
\end{array}
\end{equation}
As well, (\ref{gauss}) yields for $p\ge 3$
\begin{equation}\label{s3_m}
\begin{array}{c}
\exp\left\{\dfrac{\kappa_{2p}}{n^p}\sigma_p\right\}=C_p^{\prime}\displaystyle\int
\exp\left\{-|\kappa_{2p}|\left(\sum\limits_{l_1,..,l_{2p}=1}^{2m}\sum\limits_{s=0}^{[\frac{p}{2}]}
\overline{r}_{l,s}r_{l,s}\right)\right\}\\
\displaystyle\prod\limits_{j=1}^n\exp\left\{\dfrac{\varepsilon(\kappa_{2p})}{n^{p/2}}\sum\limits_{l_1,..,l_{2p}=1}^{2m}
\sum\limits_{q=0}^{[\frac{p}{2}]}(c_{l,q}^{(p)})^{1/2}(r_{l,q}\overline{\psi}_{jl_1}
..\overline{\psi}_{jl_{p+2q}}\psi_{jl_{p+2q+1}}.. \psi_{jl_{2p}}\right.\\
\left.+\overline{r}_{l,q}
\psi_{jl_1}..\psi_{jl_{p+2q}}\overline{\psi}_{jl_{p+2q+1}}..\overline{\psi}_{jl_{2p}})\right\}d\,R
\end{array}
\end{equation}
with $l=(l_1,\ldots,l_{2p})$ and
\begin{equation}\label{dR}
d\,R=\prod\limits_{l_1,..,l_{2p}=1}^{2m}\prod\limits_{s=0}^{[\frac{p}{2}]}
d\,\Re r_{l,s}d\,\Im r_{l,s},
\quad
C_p^{\prime}=\left(\dfrac{\pi}{|\kappa_{2p}|}\right)^{-[\frac{p}{2}](2m)^{2p}},\,\,p\ge 3.
\end{equation}
Substituting (\ref{s2_m}) -- (\ref{s3_m}) and (\ref{s1_2}) with sums from 1 to $2m$ instead of from $1$ to $2$
in (\ref{ne_fact_m}) and using (\ref{G_Gr}) -- (\ref{G_Gr_1}) we can integrate
over Grassmann variables in (\ref{ne_fact_m}). We get
\begin{equation}\label{F_int1_m_1}
%\begin{array}{c}
D^{-1}_{2m}F_{2m}(\Lambda)=Z_m\displaystyle\int\limits_{\mathcal{H}_{2m}}d\,Q\displaystyle\int
d\,V\,d\,R\,d\,W\, e^{-\frac{n}{2}\hbox{tr}\,Q^2}
\widetilde{\nu}_n(v,w,r)\Phi^n(iQ+\Lambda,v,w,r),\\
%\end{array}
\end{equation}
where $\mathcal{H}_{2m}$ is the space of hermitian $2m\times 2m$ matrices,
\begin{eqnarray}\label{vwr}
v&=&\{v_{a_1a_2b_1b_2}|a_1\ne b_1,\,a_2\ne b_2,\,a_1,a_2,b_1,b_2=1,..,2m\},\\ \notag
w&=&\{w_{a_1a_2b_1b_2}|a_1<a_2<b_1<b_2,\,a_1,a_2,b_1,b_2=1,..,2m\},\\ \notag
r_p&=&\{r_{l,s}|l_1,..,l_{2p}=1,..,2m,\, s=0,..,[p/2]\},\\ \notag
r&=&(r_3,\ldots,r_{2m}),
\end{eqnarray}
and
\begin{equation}\label{nu_til}
%\begin{array}{c}
\widetilde{\nu}_n(v,w,r)=\exp\left\{-|\kappa_4|\overline{v}v-2|\kappa_4|\overline{w}w-\sum\limits_{p=3}^{2m}|\kappa_{2p}|
\overline{r}_pr_p\right\}.
%\sum\limits_{a_1\ne b_1, a_2\ne b_2}\overline{v}_{a_1a_2b_1b_2}v_{a_1a_2b_1b_2}\right\}\\
%\exp\left\{-2|\kappa_4|\sum\limits_{c_1<c_2<d_1<d_2}\overline{w}_{c_1c_2d_1d_2}w_{c_1c_2d_1d_2}-
%\sum\limits_{p=3}^{2m}|\kappa_{2p}|\sum\limits_{l_1,..,l_{2p}=1}^{2m}\sum\limits_{s=0}^{[\frac{p}{2}]}
%\overline{r}_{l_1,..,l_{2p},s}r_{l_1,..,l_{2p},s}\right\},\\
%\Phi(Q,v,w,r)=\mdet(iQ+\Lambda)+
%\dfrac{2|\kappa_4|}{n}\sigma_1^\prime+\widetilde{f}_n(Q,v/n,w/n,r/n^{p/2}),
%\end{array}
\end{equation}
$d\,Q$, $d\,V$, $d\,R$ and $d\,W$ are defined in (\ref{dH}) for $n=2m$, (\ref{dVdW}) and (\ref{dR}), and
\begin{equation}\label{Q_m}
Q=
\left(
\begin{array}{cccccc}
\tau_1&u_{12}&u_{13}& .. &u_{1,2m-1}&u_{1,2m}\\
\overline{u}_{12}&\tau_2&u_{23}&..&u_{2,2m-1}&u_{2,2m}\\
\overline{u}_{13}&\overline{u}_{23}&\tau_3&..&u_{3,2m-1}&u_{3,2m}\\
..&..&..&..&..&..\\
\overline{u}_{1,2m-1}&\overline{u}_{2,2m-1}&\overline{u}_{3,2m-1}&..&\tau_{2m-1}&u_{2m,2m-1}\\
\overline{u}_{2m,1}&\overline{u}_{2m,2}&\overline{u}_{2m,3}&..&\overline{u}_{2m-1,2m}&\tau_{2m}\\
\end{array}
\right)
\end{equation}
is obviously hermitian. We denote also
%$\widetilde{f}_n(Q,v/n,w/n,r/n^{p/2})$ is a polynomial of the entries of the matrix $Q$ and of
%the variables $\{v_{l_1l_2s_1s_2}/n\}$, $\{w_{l_1l_2s_1s_2}/n\}$, $\{r_{l_1,..,l_{2p},s}/n^{p/2}\}$ with $n$-independent
%coefficients and degree at most $2m$. The power of each variables in $\widetilde{f}_n(Q,v/n,w/n,r/n^{p/2})$
%is at most one, but each term of this polynomial contains or one of $\{r_{l_1,..,l_{2p},s}/n^{p/2}\}$, or
%the product of at least two variables of $\{v_{l_1l_2s_1s_2}/n\}$, $\{w_{l_1l_2s_1s_2}/n\}$.
%Here we denoted
\begin{equation}\label{lam,z}
\begin{array}{c}
\Lambda=\hbox{diag}\{\lambda_1,\lambda_2,\ldots,\lambda_{2m}\},\quad
Z_m=D^{-1}_{2m}\dfrac{n^{2m^2}}{2^m\pi^{2m^2}}\prod\limits_{p=2}^{2m}C_p^\prime.\\
%\sigma_1^\prime=\sum\limits_{l_1\ne s_1,l_2\ne
%s_2}(v_{l_1l_2s_1s_2}q_{l_1,s_1,l_2,s_2}+\overline{v}_{l_1l_2s_1s_2}q_{l_2,s_2,l_1,s_1}).
\end{array}
\end{equation}
%where $q_{s,l,p,r}$ is $(2m-2)\times(2m-2)$ minor of the matrix $iQ+\Lambda$ without rows
%with number $s$ and
%$l$ and columns with number $p$ and $r$.
According to (\ref{G_Gr}) -- (\ref{G_Gr_1}) $\Phi(iQ+\Lambda,v,w,r)$ in (\ref{F_int1_m_1}) is a polynomial of the
entries of $iQ+\Lambda$ and of $\{v_{l_1l_2s_1s_2}/n\}$, $\{w_{l_1l_2s_1s_2}/n\}$, $\{r_{l_1,..,l_{2p},s}/n^{p/2}\}$ with $n$-independent
coefficients and degree at most $2m$ and
\begin{enumerate}

\item  the degree of each variable in $\Phi(iQ+\Lambda,v,w,r)$
is at most one;

\item  $\Phi(iQ+\Lambda,v,w,r)$ does not contain terms $C(iQ+\Lambda)w_{l_1l_2s_1s_2}/n$ or $C(iQ+\Lambda)\overline{w}_{l_1l_2s_1s_2}/n$,
since the terms $\overline{\psi}_{jl_1}\overline{\psi}_{jl_2}\overline{\psi}_{js_1}\overline{\psi}_{js_2}$ or
$\psi_{jl_1}\psi_{jl_2}\psi_{js_1}\psi_{js_2}$ cannot be completed to $\prod\limits_{l=1}^{2m}
\overline{\psi}_{jl}\psi_{jl}$ only by terms $\overline{\psi}_{jl}\psi_{js}$;

\item  $\Phi(iQ+\Lambda,v,w,r)$ can be written as
 \begin{equation}\label{Phi}
\Phi(iQ+\Lambda,v,w,r)=\mdet(iQ+\Lambda)-
\dfrac{2\varepsilon(\kappa_4)}{n}\sigma_1^\prime+\widetilde{f}_n(iQ+\Lambda,v/n,w/n,r_p/n^{p/2}),
\end{equation}
where $\widetilde{f}_n(iQ+\Lambda,v/n,w/n,r/n^{p/2})$ contains all terms of $\Phi(iQ+\Lambda,v,w,r)$
which are $O(n^{-3/2})$ as $n\to\infty$ and as $Q$, $v$, $w$, $r$ are fixed,
and $\sigma_1^\prime$ contains linear with respect to $v$ terms. In view of (\ref{G_Gr_1})
\begin{equation}\label{sig'}
\begin{array}{c}
%\Lambda=\hbox{diag}\{\lambda_1,\lambda_2,\ldots,\lambda_{2m}\},\quad
%Z_m=\dfrac{n^{2m^2}}{2^m\pi^{2m^2}}\prod\limits_{p=2}^{2m}C_p^\prime\\
\sigma_1^\prime=\sum\limits_{l_1\ne s_1,l_2\ne
s_2}(v_{l_1l_2s_1s_2}q_{l_1,s_1,l_2,s_2}+\overline{v}_{l_1l_2s_1s_2}q_{l_2,s_2,l_1,s_1}),
\end{array}
\end{equation}
where $q_{s,l,p,r}$ is $(2m-2)\times(2m-2)$ minor of the matrix $iQ+\Lambda$ without rows
with numbers $s$ and
$l$ and columns with numbers $p$ and $r$.
\end{enumerate}
Recall that we are interested in $\Lambda=\Lambda_0+\widehat{\xi}/n\rho_{sc}(\lambda_0)$, where
$\Lambda_0=\hbox{diag}\{\lambda_0,\ldots,\lambda_0\}$ and
$\widehat{\xi}=\hbox{diag}\{\xi_1,\ldots,\xi_{2m}\}$.
Shift now $\tau_j-i\lambda_0/2-i\xi_j/n\rho_{sc}(\lambda_0)\to \tau_j$, $j=1,..,2m$.
Then similarly to (\ref{F_int1_2}) we obtain in new variables
\begin{eqnarray}\label{F_int1_m}
%\begin{array}{c}
D^{-1}_{2m}F_{2m}(\Lambda)&=&Z_m\displaystyle\int\limits_{\mathcal{H}_{2m}}d\,Q\displaystyle\int
\widetilde{\nu}_n(v,w,r)\Phi^n(iQ+\Lambda_0/2,v,w,r)\\ \notag
&\times&e^{-\frac{n}{2}\hbox{tr}\,(Q+\frac{i\Lambda_0}{2})^2-
i\hbox{tr}\,(Q+\frac{i\Lambda_0}{2})\widehat{\xi}/\rho_{sc}(\lambda_0)-
\frac{1}{2n}\hbox{tr}\,\frac{\widehat{\xi}^2}{\rho_{sc}(\lambda_0)^2}}d\,V\,d\,R\,d\,W,
%\end{array}
\end{eqnarray}
where $Q$ is the hermitian matrix of (\ref{Q_m}) and $d\,Q$, $d\,V$, $d\,R$ and $d\,W$ are defined in (\ref{dH}) for
$n=2m$, (\ref{dVdW}) and
(\ref{dR}).
The (1) condition of $\Phi$ yields
\begin{equation}\label{ogr_Phi1}
\begin{array}{c}
|\Phi(iQ+\frac{\Lambda_0}{2},v,w,r)|\le \displaystyle\prod\limits_{q,s}(1+C|(iQ+\Lambda_0/2)_{qs}|)
\displaystyle\prod\limits_{l_1\ne s_1, l_2\ne s_2}
\left(1+C\left|\dfrac{v_{l_1l_2s_1s_2}}{n}\right|\right)\\
\times\displaystyle\prod\limits_{a_1<a_2<b_1<b_2}\left(1+C\left|\dfrac{w_{a_1a_2b_1b_2}}{n}\right|\right)
\displaystyle\prod\limits_{p=3}^{2m}\displaystyle\prod\limits_{l_1,..,l_{2p}=1}^{2m}\prod\limits_{s=0}^{[\frac{p}{2}]}
\left(1+C\left|\dfrac{r_{l,s}}{n^{p/2}}\right|\right)
\end{array}
\end{equation}
with $n$-independent $C$, and (\ref{Phi}) yields
\begin{equation}\label{ogr_Phi2}
\begin{array}{c}
|\Phi(iQ+\Lambda_0,v,w,r)|\le |\mdet(iQ+\Lambda_0/2)|\displaystyle\prod\limits_{l_1\ne s_1, l_2\ne s_2}
\left(1+C(Q)\left|\dfrac{v_{l_1l_2s_1s_2}}{n}\right|\right)\\
\times\displaystyle\prod\limits_{l_1<l_2<s_1<s_2}\left(1+C(Q)\left|\dfrac{w_{l_1l_2s_1s_2}}{n}\right|\right)
\displaystyle\prod\limits_{p=3}^{2m}\prod\limits_{l_1,..,l_{2p}=1}^{2m}\displaystyle\prod\limits_{s=0}^{[\frac{p}{2}]}
\left(1+C(Q)\left|\dfrac{r_{l,s}}{n^{p/2}}\right|\right).
\end{array}
\end{equation}
Here $C(Q)$ is bounded if $a\le |q_l-i\lambda_0/2|\le A$, $l=1,..,2m$ and
$\{q_l\}_{l=1}^{2m}$ are the eigenvalues of $Q$. Note that if $|\lambda_0|>\delta>0$,
then $|q_l-i\lambda_0/2|\ge \delta^2$ everywhere. Denote
\begin{eqnarray}\label{Om_n_m}
\Omega_n&=&\{(Q,v,w,r): a\le|q_s-i\lambda_0/2|\le A,\,|v_{l_1l_2s_1s_2}|\le \log n,\\ \notag
& &|w_{l_1l_2s_1s_2}|\le \log n,\,
|r_{l,s}|\le \log n\},\\ \notag
\Omega_n^Q&=&\{Q\in \mathcal{H}_{2m}:a\le|q_s-i\lambda_0/2|\le A,\, s=1,..,2m\}.
\end{eqnarray}
According to (\ref{ogr_Phi1}) the integral in (\ref{F_int1_m}) over the domain $\max\limits_{l=1,..,2m}
|q_l-i\lambda_0/2|\ge A$ is $O(e^{-nA^2/4})$, $A\to\infty$ and the integral over the domain $\min\limits_{l=1,..,2m}
|q_l-i\lambda_0/2|\le a$ is $O(e^{-n\log a^{-1}})$, $a\to 0$. Moreover, the bound (\ref{ogr_Phi2}) implies
that this integral over the domain,
where the absolute value of at least one of $\{v_{l_1l_2s_1s_2}\}$, $\{w_{l_1l_2s_1s_2}\}$ or $\{r_{l,s}\}$ is
greater then $\log n$ but $a\le |q_s-i\lambda_0/2|\le A$, $s=1,..,2m$, can be bounded by
$e^{-c\log^2 n}$ (similarly to (\ref{bound_vne})). Therefore, using (\ref{D_int}), (\ref{D_2m}), and (\ref{lam,z})
to bound the integral with $|\mu_n(Q)|$, we can write
\begin{equation}\label{F_int2_m}
\begin{array}{c}
D^{-1}_{2m}F_{2m}(\Lambda)=Z_m\displaystyle\int\limits_{\Omega_n} \mu_n(Q)e^{-\hbox{tr}\,
(Q+\frac{i\Lambda_0}{2})\frac{i\widehat{\xi}}{\rho_{sc}(\lambda_0)}
-2\varepsilon(\kappa_4)\mdet^{-1}(iQ+\Lambda_0/2)\sigma_1^\prime-\frac{1}{2n}\hbox{tr}\,\frac{\widehat{\xi}^2}{\rho_{sc}(\lambda_0)^2}}\\
\times\widetilde{\nu}_n(v,w,r)\left(1+
f_n(Q,v,w,r)\right)d\,Q\,d\,V\,d\,R\,d\,W+O(e^{-c\log^2 n}),
\end{array}
\end{equation}
where $\mu_n$, $\sigma_1^\prime$ and $\widetilde{\nu}_n(v,w,r)$ are defined in (\ref{mu}), (\ref{sig'}) and
(\ref{nu_til}) respectively, and
\begin{equation}\label{f_n}
f_n(Q,v,w,r)=e^{2\varepsilon(\kappa_4)\mdet^{-1}(iQ+\frac{\Lambda_0}{2})\sigma_1^\prime}
\left(1+\dfrac{\widetilde{f}_n(iQ+\frac{\Lambda_0}{2},v,w,r)
-2\varepsilon(\kappa_4)\sigma_1^\prime/n}{\mdet(iQ+\Lambda_0/2)}\right)^n-1.
\end{equation}
Note that $f_n$ is an analytic function of the entries of $Q$, and in view of (\ref{ogr_Phi2}) we have on $\Omega_n$
\begin{equation}\label{bound_f}
|f_n(Q,v,w,r)|\le n^{-1/2}\log^k n,
\end{equation}
where $k$ is independent of $n$. It is easy to check that
\begin{eqnarray}\label{int_mu}
I&:=&\displaystyle\int\limits_{\Omega_n}\nu_n(Q,v,w,r)d\,V\,d\,R\,d\,W\\ \notag
&=&\prod\limits_{p=2}^{2m}(C_p^\prime)^{-1}e^{\kappa_4
\sigma(iQ+\Lambda_0/2)\mdet^{-2}(iQ+\Lambda_0/2)}+O(e^{-c\log^2n}),
\end{eqnarray}
where
\begin{equation}\label{sigma_u_m}
\begin{array}{c}
\nu_n(Q,v,w,r)=\exp\{-2\varepsilon(\kappa_4)\mdet^{-1}(iQ+\Lambda_0/2)\sigma_1^\prime\}\widetilde{\nu}_n(v,w,r),\\
\sigma(iQ+\Lambda_0/2)=\sum\limits_{l_1\ne s_1,l_2\ne s_2}q_{l_1,s_1,l_2,s_2}q_{l_2,s_2,l_1,s_1}\quad
\end{array}
\end{equation}
with $q_{l_1,s_1,l_2,s_2}$ defined in (\ref{sig'}) (but for the matrix $iQ+\Lambda_0/2$ instead of $iQ+\Lambda_0$).
Note also that according to the Cauchy-Binet formula (see \cite{Gant:59}),
we have that $\sigma(iQ+\Lambda_0/2)$ is the sum $S_{2m-2}(A)$ of principal minors of order
$(2m-2)\times (2m-2)$
of the matrix $$A=(iQ^*+\Lambda_0/2)(iQ+\Lambda_0/2)=U^*(iT_0+\Lambda_0/2)^2U,$$
where $U$ is a unitary $2m\times 2m$ matrix diagonalizing $Q$ and
$T_0=\hbox{diag}\{q_1,..,q_{2m}\}$, i.e. $Q=U^{*}T_0 U$.
Since $S_{2m-2}(A)$ is a coefficient under $\lambda^2$ in the characteristic polynomial $\mdet(A-\lambda I)$,
$S_{2m-2}(A)$
is unitary invariant, and thus $\sigma(iQ+\Lambda_0/2)$ is unitary invariant too.
Therefore, we have on $\Omega_n$ of (\ref{Om_n_m})
\[
\left|\sigma(iQ+\frac{\Lambda_0}{2})\mdet^{-2}(iQ+\frac{\Lambda_0}{2})\right|=
\left|\sum\limits_{1\le s<l\le 2m}\dfrac{1}{(iq_s+\frac{\lambda_0}{2})^2(iq_l+\frac{\lambda_0}{2})^2}\right|\le
C,
\]
and hence $|I|>C>0$.
This, (\ref{F_int2_m}) and (\ref{int_mu})  yield
\begin{eqnarray}\label{F_int3_m}
D^{-1}_{2m}F_{2m}(\Lambda)&=&\dfrac{n^{2m^2}D_{2m}^{-1}}{2^m\pi^{2m^2}}\int\limits_{\Omega_n^Q}
e^{-\hbox{tr}\,(Q+\frac{i\Lambda_0}{2})\frac{i\widehat{\xi}}{\rho_{sc}(\lambda_0)}+
\kappa_4\,\sigma(iQ+\frac{\Lambda_0}{2})/\mdet(iQ+\frac{\Lambda_0}{2})^2}
\\ \notag
&\times&\mu_n(Q)\left(1+f^{(1)}_n(Q)\right)d\,Q+O(e^{-c\log^2 n}),
\end{eqnarray}
where $\mu_n$ is defined in (\ref{mu}) and
\begin{multline}\label{f-1}
f^{(1)}_n(Q)=e^{-\frac{1}{2n}\hbox{tr}\,\frac{\widehat{\xi}^2}{\rho_{sc}(\lambda_0)^2}}-1\\
+I^{-1}e^{-\frac{1}{2n}\hbox{tr}\,\frac{\widehat{\xi}^2}{\rho_{sc}(\lambda_0)^2}}\displaystyle\int\limits_{\Omega_n}\nu_n(Q,v,w,r)f_n(Q,v,w,r)d\,V\,d\,R\,d\,W
\end{multline}
with $I$ of (\ref{int_mu}) and $\nu_n$ of (\ref{sigma_u_m}). According to (\ref{f_n}) and bound from below of $|I|$ on $\Omega_n$, we have
\begin{equation}\label{f_bound}
|f^{(1)}_n(Q)|\le \log^k n/n^{1/2}.
\end{equation}
Besides, $f^{(1)}_n(Q)$ is analytic in elements of $Q$.

Let us change variables to $Q=U^{*}T U$, where $U$ is a unitary $2m\times 2m$ matrix and
$T=\hbox{diag}\{t_1,\ldots,t_{2m}\}$. The differential $d\, Q$ in (\ref{F_int3_m}) transforms
to $\triangle^2(T)d\,T d\,\mu(U)$, where
$d\,T=\prod\limits_{l=1}^{2m}d\,t_l$, $\triangle(T)$ is a Vandermonde determinant of $\{t_l\}_{l=1}^{2m}$,
and $\mu(U)$ is the normalized to unity Haar measure on the unitary group
$U(2m)$ (see e.g. \cite{Me:91}, Section 3.3). Functions $\mdet (iQ+\frac{\Lambda_0}{2})$,
$\hbox{tr}\,(Q+\frac{i\Lambda_0}{2})^2$ and $\sigma(iQ+\frac{\Lambda_0}{2})$
(as we proved before) are unitary invariant. Hence, (\ref{F_int3_m}) implies
\begin{equation}\label{F_int4_m}
\begin{array}{c}
D^{-1}_{2m}F_{2m}(\Lambda)=\dfrac{(-1)^{mn}n^{2m^2}}{D_{2m}2^m\pi^{2m^2}}\int\limits_{U(2m)}d\mu(U)
\int\limits_{(L_a^A)^{2m}}d\,T \prod\limits_{l=1}^{2m}(t_l-\frac{i\lambda_0}{2})^n\\
\times e^{-\frac{n}{2}\sum\limits_{l=1}^{2m}(t_l+\frac{i\lambda_0}{2})^2-
\hbox{tr}\,U^*(T+\frac{i\Lambda_0}{2})U\frac{i\widehat{\xi}}{\rho_{sc}(\lambda_0)}+\kappa_4\sum\limits_{1\le l<s\le 2m}
(t_l-\frac{i\lambda_0}{2})^{-2}(t_s-\frac{i\lambda_0}{2})^{-2}}\\
\times\triangle^2(T)
\left(1+f^{(1)}_n(U^*TU)\right) +O(e^{-c\log^2 n}),
\end{array}
\end{equation}
where $L_a^A$ is defined in (\ref{L}).
Using Proposition \ref{p:Its-Z} we have
%\begin{equation}\label{Zub}
%\intd e^{-i\hbox{tr}\,U^*(T+\frac{i\Lambda_0}{2})U\frac{\widehat{\xi}}{\rho_{sc}(\lambda_0)}}d\,\mu(U)
%=C_{2m}\dfrac{\mdet\{e^{-i(t_j+\frac{i\lambda_0}{2})\frac{\xi_k}{\rho_{sc}(\lambda_0)}}\}_{j,k=1}^{2m}}
%{\Delta(T)\Delta(-i\widehat{\xi}/\rho_{sc}(\lambda_0))},
%\end{equation}
%and thus
\begin{equation}\label{F_int5_m}
\begin{array}{c}
D^{-1}_{2m}F_{2m}(\Lambda)=\dfrac{(-1)^{mn}n^{2m^2}}{D_{2m}2^m\pi^{2m^2}}
\int\limits_{(L_a^A)^{2m}}
e^{-\frac{n}{2}\sum\limits_{l=1}^{2m}(t_l+\frac{i\lambda_0}{2})^2+
\kappa_4\sum\limits_{1\le i<j\le 2m}(t_i-\frac{i\lambda_0}{2})^{-2}(t_j-\frac{i\lambda_0}{2})^{-2}}\\
\triangle^2(T)\prod\limits_{l=1}^{2m}(t_l-\frac{i\lambda_0}{2})^n
\left(\dfrac{\pi^me^{-i\sum\limits_{l=1}^{2m}(t_l+\frac{i\lambda_0}{2})\frac{\xi_l}{\rho_{sc}(\lambda_0)}}}
{\Delta(T)\Delta(-i\widehat{\xi}/\rho_{sc}(\lambda_0))}+\widetilde{f}^{(2)}_n(T)\right)
d\,T+O(e^{-c\log^2 n}),
\end{array}
\end{equation}
where
\[
\widetilde{f}^{(2)}_n(T)=\int \exp\left\{-i\hbox{tr}\,U^*(T+\frac{i\Lambda_0}{2})U\frac{\widehat{\xi}}{\rho_{sc}(\lambda_0)}\right\}f^{(1)}_n(U^*TU)d\,\mu(U).
%\dfrac{\intd \exp\left\{-i\hbox{tr}\,U^*TU\widehat{\xi}\right\}f^{(1)}_n(U^*TU)d\,\mu(U)
%}{\intd \exp\left\{-i\hbox{tr}\,U^*TU\widehat{\xi}\right\}d\,\mu(U)}.
\]
According to (\ref{f_bound}), we get that
\begin{equation}\label{f2_bound}
|\widetilde{f}^{(2)}_n(T)|\le n^{-1/2}\log^k n, t_l\in L_a^A,\,\,l=1,..2m.
\end{equation}
Hence, we obtain finally
\begin{equation}\label{F_int_m}
\begin{array}{c}
D^{-1}_{2m}F_{2m}(\Lambda)
=\dfrac{(-1)^{mn}n^{2m^2}}{D_{2m}2^m\pi^{2m^2-m}}\int\limits_{(L_a^A)^{2m}}
\prod\limits_{l=1}^{2m}(t_l-\frac{i\lambda_0}{2})^n \dfrac{\triangle(T)}
{\Delta(\widehat{\xi})}(i\rho_{sc}(\lambda_0))^{m(2m-1)}\\
\times e^{-\frac{n}{2}\sum\limits_{l=1}^{2m}(t_l+\frac{i\lambda_0}{2})^2-i\sum\limits_{l=1}^{2m}
(t_l+\frac{i\lambda_0}{2})\frac{\xi_l}{\rho_{sc}(\lambda_0)}+
 \kappa_4\sum\limits_{l_1<l_2}(t_{l_1}-\frac{i\lambda_0}{2})^{-2}
(t_{l_2}-\frac{i\lambda_0}{2})^{-2}}\\
\times\left(1+f^{(2)}_n(T)\right)\prod\limits_{j=1}^{2m}d\,t_j+O(e^{-c\log^2 n}),
\end{array}
\end{equation}
where
\[
f^{(2)}_n(T)=\Delta(T)\Delta(-i\widehat{\xi}/\rho_{sc}(\lambda_0))
e^{i\sum\limits_{l=1}^{2m}t_l\xi_l/\rho_{sc}(\lambda_0)}\widetilde{f}^{(2)}_n(T).
\]
$f^{(2)}_n(T)$ is an analytic function bounded by $n^{-1/2}\log^kn$ if $t_l\in L_a^A$, $l=1,..,2m$.

\section{Asymptotic analysis.}
In this section we prove Theorem \ref{thm:1} passing to the limit $n\to\infty$ in (\ref{F_int_m}) for
$\lambda_j=\lambda_0+\xi_j/n\rho_{sc}(\lambda_0)$, where $\rho_{sc}$ is defined in
(\ref{rho_sc}), $\lambda_0\in(-2,2)$ and $\xi_j\in [-M,M]\subset\mathbb{R}$, $j=1,..,2m$.

To this end consider the function
\begin{equation}\label{V}
V(t,\lambda_0)=\dfrac{t^2}{2}+\frac{i\lambda_0}{2} t-\log (t-i\lambda_0/2)-\dfrac{4-\lambda_0^2}{8}.
\end{equation}
Then (\ref{F_int_m}) yields
\begin{equation}\label{F_int_V}
\dfrac{D_{2m}^{-1}}{(n\rho_{sc}(\lambda_0))^{m^2}}F_{2m}(\Lambda)=Z_{m,n}
\intd\limits_{(L_a^A)^{2m}} W_n(t_1,\ldots,t_{2m})d\,T+O(e^{-c\log^2 n}),
\end{equation}
where $D_{2m}$ is defined in (\ref{D_2m}),
\begin{equation}\label{W_n}
\begin{array}{c}
W_n(t_1,\ldots,t_{2m})=
e^{-n\sum\limits_{l=1}^{2m}V(t_l,\lambda_0)-i\sum\limits_{l=1}^{2m}\frac{\xi_l}{\rho_{sc}(\lambda_0)} t_l}\dfrac{\triangle(T)}
{\triangle(\widehat{\xi})}\\
\times e^{\kappa_4
\sum\limits_{1\le l<s\le 2m}(t_l-\frac{i\lambda_0}{2})^{-2}(t_s-\frac{i\lambda_0}{2})^{-2}}
\left(1+f^{(2)}_n(T)\right),
\end{array}
\end{equation}
and
\begin{equation}\label{Z_mn}
Z_{m,n}=\dfrac{(-1)^{mn}n^{m^2}\rho_{sc}(\lambda_0)^{m(m-1)}e^{-m\kappa_4}
}{(-i)^{m(2m-1)}2^{2m}\pi^{2m^2}}.
\end{equation}
Now we need
\begin{lemma}\label{l:min_L}
The function $\Re V(t,\lambda_0)$ for $t\in \mathbb{R}$ has the minimum at the points
\begin{equation}\label{x_pm}
t=x_{\pm}:=\pm\dfrac{\sqrt{4-\lambda_0^2}}{2}.
\end{equation}
Moreover, if $t\not\in U_n(x_\pm):=(x_\pm-n^{-1/2}\log n,
x_\pm+n^{-1/2}\log n)$, then we have for sufficiently big $n$
\begin{equation}\label{ineqv_ReV}
\Re V(t,\lambda)\ge \dfrac{C\log^2n}{n}.
\end{equation}
\end{lemma}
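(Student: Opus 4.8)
The plan is to analyze the real part of the function
\[
V(t,\lambda_0)=\dfrac{t^2}{2}+\dfrac{i\lambda_0}{2}t-\log(t-i\lambda_0/2)-\dfrac{4-\lambda_0^2}{8}
\]
directly on the real line. First I would separate real and imaginary parts. For real $t$ one has $t-i\lambda_0/2=\sqrt{t^2+\lambda_0^2/4}\,e^{i\theta(t)}$ with $\theta(t)=\arctan(-\lambda_0/(2t))$, so that
\[
\Re\log(t-i\lambda_0/2)=\dfrac12\log\!\Big(t^2+\dfrac{\lambda_0^2}{4}\Big),
\qquad
\Re\Big(\dfrac{i\lambda_0}{2}t\Big)=0.
\]
Hence for real $t$,
\[
\Re V(t,\lambda_0)=\dfrac{t^2}{2}-\dfrac12\log\!\Big(t^2+\dfrac{\lambda_0^2}{4}\Big)-\dfrac{4-\lambda_0^2}{8}.
\]
This is now an explicit real-valued function of the single real variable $t$, which is the essential simplification.

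**Locating and classifying the critical points.**

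Next I would compute the derivative of this explicit expression with respect to $t$:
\[
\dfrac{d}{dt}\,\Re V(t,\lambda_0)=t-\dfrac{t}{t^2+\lambda_0^2/4}
=\dfrac{t\,(t^2+\lambda_0^2/4-1)}{t^2+\lambda_0^2/4}.
\]
The critical points are therefore $t=0$ and $t^2=1-\lambda_0^2/4=(4-\lambda_0^2)/4$, i.e. $t=x_\pm=\pm\sqrt{4-\lambda_0^2}/2$, which are real and distinct precisely because $\lambda_0\in(-2,2)$. Since the numerator $t(t^2+\lambda_0^2/4-1)$ is negative on $(-\infty,x_-)$, positive on $(x_-,0)$, negative on $(0,x_+)$ and positive on $(x_+,\infty)$, the points $x_\pm$ are local minima and $t=0$ is a local maximum. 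Comparing values, $\Re V(x_+,\lambda_0)=\Re V(x_-,\lambda_0)$ by evenness in $t$, and a direct substitution (using $x_\pm^2+\lambda_0^2/4=1$) gives $\Re V(x_\pm,\lambda_0)=0$; together with $\Re V(t,\lambda_0)\to+\infty$ as $|t|\to\infty$ this confirms that $x_\pm$ are the global minima on $\mathbb{R}$ with minimum value zero.

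**The quantitative bound away from the minima.**

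For the second assertion I would Taylor-expand $\Re V$ at each minimum. Writing $g(t)=\Re V(t,\lambda_0)$, we have $g(x_\pm)=0$, $g'(x_\pm)=0$, and the second derivative
\[
g''(t)=1-\dfrac{\lambda_0^2/4-t^2}{(t^2+\lambda_0^2/4)^2}
\]
evaluated at $x_\pm$ (where $t^2+\lambda_0^2/4=1$) equals $1+(x_\pm^2-\lambda_0^2/4)=2x_\pm^2-\lambda_0^2/4+\cdots$; the point is only that $g''(x_\pm)=c_0>0$ is a strictly positive constant depending on $\lambda_0$. Thus near each minimum $g(t)\ge \tfrac{c_0}{4}(t-x_\pm)^2$ for $t$ in a fixed neighborhood, while away from both minima $g$ is bounded below by a positive constant by compactness and the strict inequalities established above (and grows like $t^2/2$ at infinity). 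Combining: if $t\notin U_n(x_\pm)$ then either $|t-x_\pm|\ge n^{-1/2}\log n$ for the nearest minimum, giving $g(t)\ge \tfrac{c_0}{4}(n^{-1/2}\log n)^2=C\log^2 n/n$, or $t$ lies outside the fixed neighborhoods and $g(t)$ is bounded below by a constant, which dominates $C\log^2 n/n$ for large $n$. The main obstacle is merely bookkeeping: making the neighborhoods used for the quadratic lower bound uniform in $\lambda_0$ on compacts of $(-2,2)$, so that the constant $C$ in \eqref{ineqv_ReV} can be chosen $n$-independent; the analysis itself is an elementary one-variable calculus argument once the real part is written explicitly.
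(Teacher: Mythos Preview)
Your proof is correct and follows essentially the same approach as the paper: both compute $\Re V(t,\lambda_0)=\tfrac{1}{2}\big(t^2-(4-\lambda_0^2)/4-\log(t^2+\lambda_0^2/4)\big)$ explicitly, locate the minima $x_\pm$ via the first derivative, observe $\Re V(x_\pm,\lambda_0)=0$, and then obtain the quantitative bound from the quadratic Taylor expansion at $x_\pm$. Your treatment is in fact slightly more thorough than the paper's in handling the region away from both minima (via compactness and growth at infinity), and the minor arithmetic wobble in evaluating $g''(x_\pm)$ --- it equals $2x_\pm^2=(4-\lambda_0^2)/2$ --- does not affect the argument since you only use its positivity.
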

\begin{proof}
%It is easy to see, that
%\begin{equation}\label{p}
%\dfrac{d}{d\,z}V(z,\lambda)=0\quad \Leftrightarrow \quad z=p_{\pm}(\lambda):=\dfrac{-i\lambda\pm\sqrt{4-\lambda^2}}{2}.
%\end{equation}
Note that for $t\in\mathbb{R}$
\begin{equation}\label{ReV}
\Re V(t,\lambda_0)=\dfrac{1}{2}\left(t^2-(4-\lambda_0^2)/4-\log (t^2+\lambda_0^2/4)\right),
\end{equation}
and thus
\begin{eqnarray}\label{ReV_pr}
\dfrac{d}{d\,t}\Re V(t,\lambda_0)&=&t-\dfrac{t}{t^2+\lambda_0^2/4},\\ \notag
\dfrac{d^2}{d\,t^2}\Re V(t,\lambda_0)&=&1-\dfrac{1}{t^2+\lambda_0^2/4}+\dfrac{2t^2}{(t^2+\lambda_0^2/4)^2}.
\end{eqnarray}
Therefore, $t=x_\pm$ of (\ref{x_pm}) are the minimum points of
$\Re V(t,\lambda_0)$.
Note that
\begin{eqnarray}\label{V_pm}
V_+:=V(x_{+},\lambda_0)&=&\dfrac{i\lambda_0\sqrt{4-\lambda^2_0}}{4}-i\arcsin(-\lambda_0/2),\\ \notag
V_-:=V(x_{-},\lambda_0)&=&-\dfrac{i\lambda_0\sqrt{4-\lambda^2_0}}{4}+i\arcsin(-\lambda_0/2)
-i\pi.
\end{eqnarray}
Thus we have
\[
\Re V(x_{\pm},\lambda_0)=0.
\]
Expanding $\Re V(t,\lambda_0)$ into the Taylor series, we obtain for $t\in U_n(x_\pm)$,
using (\ref{ReV_pr}) -- (\ref{V_pm})
\begin{equation}\label{ineqv_ne0}
\Re V(t,\lambda_0)=\dfrac{4-\lambda_0^2}{4}(t-x_\pm)^2+O(n^{-3/2}\log^3n),
\end{equation}
where $x_\pm$ is defined in (\ref{x_pm}). Hence, for $t\not\in U_n(x_\pm)$ we get
\[
\Re V(t,\lambda)\ge \dfrac{C\log^2n}{n},
\]
which completes the proof of the lemma.
\end{proof}
Next note that since $|t_j-i\lambda_0/2|>a$ for $t_j\in L_a^A$, $j=1,..,2m$, we have
\begin{equation}\label{ots_exp}
\left|\exp\left\{\kappa_4
\sum\limits_{1\le l<s\le 2m}(t_l-i\lambda_0/2)^{-2}(t_s-i\lambda_0/2)^{-2}\right\}\right|\le C.
\end{equation}
This, the inequality $|\triangle(T)/\triangle(\widehat{\xi})|\le C_1$
for $|t_j|\le A$, $j=1,..,2m$ and distinct $\{\xi_j\}_{j=1}^{2m}$, (\ref{ineqv_ReV}) and (\ref{ots_exp}) yield
\begin{equation*}
\left|Z_{m,n}\int\limits_{L_a^A\setminus(U_{+}\cup U_{-})}\int\limits_{L_a^A}..\int\limits_{L_a^A}
W_n(t_1,\ldots,t_{2m})d\,T
\right|\le C_1n^{m^2}e^{-C_2\log^2n},
\end{equation*}
where $L_{a}^A$, $W_n$ and $Z_{m,n}$ are defined in (\ref{L}), (\ref{W_n}) and
(\ref{Z_mn}) respectively, and
\begin{equation}\label{U_pm}
U_{\pm}=\{t\in \mathbb{R}: |t-x_\pm|\le n^{-1/2}\log n\}
\end{equation}
with $x_\pm$ of (\ref{x_pm}).
%Therefore, in view of (\ref{F_int_V}) to prove Theorem \ref{thm:1} it is sufficient to show that
%\begin{equation}\label{int_okr}
%\begin{array}{c}
%C_{mn}(\widehat{\xi})\intd\limits_{U_{+}\cup U_{-}}
%W_n(t_1,\ldots,t_{2m})d\,T\\
%=\dfrac{e^{m|\kappa_4|(m+(m-1)(\lambda_0^4-4\lambda_0^2+2)/2)}}{\pi^{2m(m-1)}}
%\dfrac{\mdet\left\{\dfrac{\sin(\pi(\xi_i-\xi_{m+j}))}{\pi(\xi_i-\xi_{m+j})}
%\right\}_{i,j=1}^m(1+o(1))}{\Delta(\xi_1,..,\xi_m)\Delta(\xi_{m+1},..,\xi_{2m})},\quad n\to\infty.
%\end{array}
%\end{equation}
%where $\lambda_j=\lambda_0+\dfrac{\xi_j}{n\rho_{sc}(\lambda_0)}$ and
%$\rho_{sc}$ and $U_{\pm}$ are defined in (\ref{rho_sc}) and
%(\ref{U_pm}) respectively.

  Note that we have for $t\in U_{\pm}$ in view of (\ref{V}) and (\ref{V_pm}) as $n\to\infty$
\begin{equation}\label{as_razl_V}
V(t,\lambda_0)=V_{\pm}+\left(1+\dfrac{1}{(x_{\pm}-i\lambda_0/2)^2}\right)
\dfrac{(t-x_{\pm})^2}{2}+
f_\pm(t-x_{\pm}),
\end{equation}
where $f_\pm(t-x_{\pm})=O((t-x_{\pm})^3)$.
%\begin{equation}\label{p_pm}
%p_\pm=\dfrac{-i\lambda_0\pm\sqrt{4-\lambda_0^2}}{2}
%\end{equation}
% and
%\begin{equation}\label{V_pm}
%\begin{array}{c}
%V_{+ 1,j}=-\dfrac{i\lambda_0\sqrt{4-\lambda^2_0}}{4}+i\arcsin(-\lambda_0/2)
%-\dfrac{i\pi\xi}{n},\\
%V_{- 1,j}=\dfrac{i\lambda_0\sqrt{4-\lambda^2_0}}{4}-i\arcsin(-\lambda_0/2)
%+i\pi+\dfrac{i\pi\xi}{n}.
%\end{array}
%\end{equation}
%Thus, we can write for $z:\,|z-p_{\pm}|\le n^{-1/2}\log n$
%\begin{equation}\label{V,phi}
%V(z,\lambda_0)=V_{\pm}+\dfrac{1}{2}\left(1+\dfrac{1}{p_{\pm}^2}\right)\varphi_{\pm}^2(z),
%\end{equation}
%where $\varphi_{\pm}(z)$ is a analytic function for $z:\,|z-p_{\pm}|\le n^{-1/2}\log n$ with analytic
%inverse $z_{\pm}(\varphi)$ (we choose $\varphi_{\pm}(z)$ such that $\varphi_{\pm}(z)\in\mathbb{R^+}$ for
%$z\in p_\pm+\mathbb{R^+}$).
%Note that since we have
%\[
%\varphi_{\pm}(p_{\pm})=0,\quad \dfrac{d}{d\,z}\varphi_{\pm}(p_{\pm})=1,
%\]
%we get
%\begin{equation}\label{z_phi}
%z_{\pm}(0)=p_{\pm},\quad \dfrac{d}{d\,\varphi}z_{\pm}(0)=1.
%\end{equation}
%Besides, since $|t_j|>\lambda_j/2>C>0$ for $\lambda_0\ne 0$ and $|t_j|>\gamma$ for $\lambda_0=0$, $t_j\in L_j(\lambda_0)$,
%$j=1,..,4$, we get
%\begin{equation}\label{exp_ots}
%\left|\exp\left\{\dfrac{1}{(t_1t_2t_3t_4)^2}
%\left(\dfrac{|\kappa_8^*|}{n^2}+\dfrac{|\kappa_6^*|}{n}\sum\limits_{j=1}^{2m}t_j^2\right)\right\}\right|=1+o(1/n),\quad
%n\to\infty.
%\end{equation}
%
Shifting $t_j-x_\pm\to t_j$ for $t_j\in U_{\pm}$ we obtain using (\ref{V_pm}) that the r.h.s. of
(\ref{F_int_V}) can be rewritten as
\begin{equation}\label{int_okr1}
\begin{array}{c}
Z_{m,n}\sum\limits_{\alpha}
\intd\limits_{(U_{n})^{2m}}\prod\limits_{j=1}^{2m}d\,t_j\prod\limits_{j=1}^{2m}
e^{-\frac{nc_{\alpha_j}}{2}t_j^2-\frac{i\xi_jt_j}{\rho_{sc}(\lambda_0)}-nf_{\alpha_j}(t_j)}\dfrac{\triangle(t_1+x_{\alpha_1},\dots,t_{2m}+x_{\alpha_{2m}})}
{\triangle(\xi_1,\dots,\xi_{2m})}\\
e^{\kappa_4\sum\limits_{1\le l<s\le 2m}(t_l+p_{\alpha_l})^{-2}
(t_s+p_{\alpha_s})^{-2}-\sum\limits_{j=1}^{2m}
(nV_{\alpha_j}+i x_{\alpha_j}\xi_j/\rho_{sc}(\lambda_0))}(1+f_n^{(2)}(T))+O(e^{-c\log^2 n}),
\end{array}
\end{equation}
where sum is over all collection $\alpha=\{\alpha_j\}_{j=1}^{2m}$, $\alpha_j=\pm$, $j=1,..,2m$ and
\begin{equation}\label{cp}
c_{\pm}=1+p_{\pm}^{-2},\quad p_\pm=x_\pm-i\lambda_0/2,\quad U_{n}=(-2n^{-1/2}\log n, 2n^{-1/2}\log n).
\end{equation}
Note that
\begin{equation}\label{I}
\begin{array}{c}
I:=\displaystyle\int\limits_{(U_{n})^{2m}}
e^{-\sum\limits_{j=1}^{2m}\frac{nc_{\alpha_j}}{2}t_j^2-\sum\limits_{j=1}^{2m}\frac{i\xi_jt_j}{\rho_{sc}(\lambda_0)}-
\sum\limits_{j=1}^{2m}nf_{\alpha_j}(t_j)}
\triangle(t_1+x_{\alpha_1},\dots,t_{2m}+x_{\alpha_{2m}})\prod\limits_{j=1}^{2m}d\,t_j\\
=\mdet\left\{\displaystyle\int\limits_{U_{n,j}}\left(t_j+x_{\alpha_j}-
\dfrac{i\xi_j}{n\rho_{sc}(\lambda_0)c_{\alpha_j}}\right)^{k-1}e^{-\frac{nc_{\alpha_j}}{2}t_j^2-
nf_{\alpha_j}(t_j-\frac{i\xi_j}{n\rho_{sc}(\lambda_0)c_{\alpha_j}})}d\,t_j\right\}_{j,k=1}^{2m},
\end{array}
\end{equation}
where $$U_{j,n}=\left(-2n^{-1/2}\log n+\frac{i\xi_j}{n\rho_{sc}(\lambda_0)c_{\alpha_j}},
2n^{-1/2}\log n+\frac{i\xi_j}{n\rho_{sc}(\lambda_0)c_{\alpha_j}}\right).$$
Since $f_\pm(t)=O(t^3)$, changing variables to $\sqrt{n}t_j\to t_j$, expanding $\exp\{-nf_{\alpha_j}
(t_j/{\sqrt{n}}-i\xi_j/n\rho_{sc}(\lambda_0)c_{\alpha_j})\}$ in (\ref{I}), and keeping the terms up to the order
$n^{-4m^2}$, we obtain as $n\to\infty$
\begin{equation}\label{I_det}
I=\prod\limits_{j=1}^{2m}\sqrt{\dfrac{2\pi}{nc_{\alpha_j}}}\,\,\mdet\left\{\left(x_{\alpha_j}-
\dfrac{i\xi_j}{n\rho_{sc}(\lambda_0)c_{\alpha_j}}\right)^{k-1}+
\dfrac{1}{n}P^{(\alpha_j)}_{k,m}(\xi_j/n)\right\}_{j,k=1}^{2m}(1+o(1)),
\end{equation}
where $P^{(+)}_{k,m}$ and $P^{(-)}_{k,m}$ are polynomials with $n$- and $j$-independent (but $k$-dependent)
coefficients of degree at most $4m^2$. Consider
\begin{equation}\label{det}
D(\xi/n,\lambda)=\mdet\left\{\left(x_{\alpha_j}-
\dfrac{i\xi_j}{n\rho_{sc}(\lambda_0)c_{\alpha_j}}\right)^{k-1}+\lambda P^{(\alpha_j)}_{k,m}(\xi_j/n)
\right\}_{j,k=1}^{2m}.
\end{equation}
Note that $D(\xi/n,\lambda)$ is a polynomial of $\{\xi_j/n\rho_{sc}(\lambda_0)\}_{j=1}^{2m}$ and $\lambda$.
Without loss of generality, let $\alpha_1=...=\alpha_s=+$, $\alpha_{s+1}=...=\alpha_{2m}=-$.
Then it is easy to see that if $\xi_j=\xi_l$ for $j,l=1,..,s$ or $j,l=s+1,..,2m$, then
$D(\xi/n,\lambda)=0$. Thus,
\begin{equation}\label{det1}
D(\xi/n,\lambda)=\Delta(\xi_1/n,\ldots,\xi_s/n)\\
\Delta(\xi_{s+1}/n,\ldots,\xi_{2m}/n)(C_0+\lambda F(\xi/n,\lambda)),
\end{equation}
where $F(\xi/n,\lambda)$ is a polynomial with bonded coefficients.
Substituting $\lambda=0$ in (\ref{det}) and computing the Vandermonde determinant, we obtain
\begin{equation*}
\begin{array}{c}
C_0=\left(\dfrac{-i}{\rho_{sc}(\lambda_0)c_+}\right)^{\frac{s(s-1)}{2}}
\left(\dfrac{-i}{\rho_{sc}(\lambda_0)c_-}\right)^{\frac{(2m-s)(2m-s-1)}{2}}\\
\prod\limits_{j=1}^{s}\prod\limits_{k=s+1}^{2m}\left(x_{+}-x_-
-
\dfrac{i\xi_j}{n\rho_{sc}(\lambda_0)c_{+}}+\dfrac{i\xi_k}{n\rho_{sc}(\lambda_0)c_{-}}\right)\\
=\left(\dfrac{-i}{\rho_{sc}(\lambda_0)c_+}\right)^{\frac{s(s-1)}{2}}
\left(\dfrac{-i}{\rho_{sc}(\lambda_0)c_-}\right)^{\frac{(2m-s)(2m-s-1)}{2}}(x_+-x_-)^{s(2m-s)}(1+o(1)).
\end{array}
\end{equation*}
Hence, for $\alpha_1=...=\alpha_s=+$, $\alpha_{s+1}=...=\alpha_{2m}=-$ we get from (\ref{det1}) as $n\to\infty$
\begin{equation}\label{I_2}
\dfrac{n^{m^2}I}{\Delta(\widehat{\xi})}\\
=
\dfrac{2^m\pi^m(-i/\rho_{sc}(\lambda_0))^{m(m-1)+(m-s)^2}n^{-(m-s)^2}(\sqrt{4-\lambda_0^2})^{s(2m-s)}(1+o(1))}{(c_+)^{s^2/2}(c_-)^{(2m-s)^2/2}
\prod\limits_{j=1}^s\prod\limits_{l=s+1}^{2m}(\xi_j-\xi_{l})}.
\end{equation}
This expression has the order at most $O(1)$, and for $s\ne m$ it is of order $o(1)$.
Hence, the terms of (\ref{int_okr1}) are
 always of order $O(1)$ and the equality holds only if $m$ of $\{\alpha_j\}_{j=1}^{2m}$ are pluses,
and $m$ last ones are minuses. Consider one of such terms in (\ref{int_okr1}), for example
$\alpha_1=..=\alpha_m=1$, $\alpha_{m+1}=..=\alpha_{2m}=-1$. Substituting the expressions (\ref{Z_mn}), (\ref{cp})
and (\ref{I_2}) with $s=m$ we can rewrite this term as
\begin{equation}\label{int_add1}
%\begin{array}{c}
\dfrac{e^{m(m-1)\kappa_4(\lambda_0^2-2)^2/2}}
{\pi^{2m^2-2m}}\dfrac{i^{m(m+1)}e^{i\pi(\xi_{m+1}+..+\xi_{2m}-\xi_1-..-\xi_m)}}{(2i\pi)^m\prod\limits_{i,j=1}^m(\xi_i-\xi_{m+j})}
%\end{array}
\end{equation}
In view of identity
\[
\dfrac{\mdet\left\{\dfrac{\sin(\pi(\xi_j-\xi_{m+k}))}{\pi(\xi_j-\xi_{m+k})}\right\}_{j,k=1}^m}{\Delta(\xi_1,..,\xi_m)
\Delta(\xi_{m+1},..,\xi_{2m})}=\dfrac{\mdet\left\{\dfrac{e^{i\pi(\xi_j-\xi_{m+k})}-e^{i\pi(\xi_{m+k}-\xi_j)}}{\xi_j-\xi_{m+k}}\right\}_{j,k=1}^m}{
(2i\pi)^m\Delta(\xi_1,..,\xi_m)
\Delta(\xi_{m+1},..,\xi_{2m})},
\]
the determinant in the l.h.s. is the sum of $\exp\{i\pi\sum\limits_{j=1}^{2m}\alpha_j\xi_j\}$ over
the collection $\{\alpha_j\}_{j=1}^{2m}$, in which $m$ of elements are pluses, and $m$ last ones are minuses,
with certain coefficient.
In view of the identity (see \cite{Po-Se:76}, Problem 7.3)
\[
(-1)^{\frac{m(m-1)}{2}}\dfrac{\prod\limits_{k<l}(a_k-a_l)(b_k-b_l)}{\prod\limits_{k,l=1}^m(a_k-b_l)}
=\det\left[\dfrac{1}{a_k-b_j}\right]_{k,j=1}^m.
\]
the coefficient under $\exp\{i\pi(\xi_{m+1}+..+\xi_{2m}-
\xi_1-..-\xi_m)\}$ is equal to
\[
\dfrac{\mdet\left\{\dfrac{1}{\xi_{m+k}-\xi_j}\right\}_{j,k=1}^m}{
(2i\pi)^m\Delta(\xi_1,..,\xi_m)
\Delta(\xi_{m+1},..,\xi_{2m})}=\dfrac{(-1)^{\frac{m(m-1)}{2}}}{(-1)^{m^2}(2i\pi)^m\prod\limits_{i,j=1}^m(\xi_i-\xi_{m+j})}.
\]
Other coefficients can be computed analogously. Thus, restricting the sum in (\ref{int_okr1})
to that over the collection $\{\alpha_j\}_{j=1}^{2m}$, in which $m$ of elements are pluses, and $m$ last ones are minuses,
and using (\ref{int_add1}), we obtain Theorem \ref{thm:1} after certain algebra.

\end{document}